\DeclareMathAlphabet{\mathds}{U}{dsrom}{m}{n}
\newtheorem{thm}{Theorem}[section]
\newtheorem{cor}[thm]{Corollary}
\newtheorem{lem}[thm]{Lemma}
\newtheorem{defn}[thm]{Definition}
\newtheorem{rem}[thm]{Remark}
\titlespacing*{\section}{0pt}{2ex plus 0.5ex minus 0.2ex}{1ex}
\titlespacing*{\subsection}{0pt}{1.5ex plus 0.3ex minus 0.1ex}{0.8ex}
\renewcommand\@makefnmark{}
\renewenvironment{abstract}{
  \begin{center}
  \vspace{-0.5em}
  \rule{\textwidth}{0.4pt}\\[4pt]
  \begin{minipage}{0.95\textwidth}
  \small
}{
  \end{minipage}\\[4pt]
  \rule{\textwidth}{0.4pt}
  \end{center}
  \vspace{1em}
}
\numberwithin{equation}{section}
\renewcommand{\maketitle}{
\begin{center}
    % Title
    {\Large\textbf{Kinetic Theory and the Mechanics of Isothermal Gas Spheres:
    Derivation of the Classical Emden--Chandrasekhar Equation
    via the Vlasov--Poisson Formalism}}
    \vskip 1em

    % Author + email as footnote
    {\normalsize \textsc{Steven D. Miller}\footnote{\texttt{stevendm@ed-alumni.net}}}
    \vskip 0.5em
\end{center}
}
\date{} % leave date empty
\begin{document}

\maketitle

\begin{abstract}
\noindent We present a derivation of the mechanics of isothermal gas spheres directly from the Vlasov--Poisson equation. By extremising the Boltzmann entropy, we obtain the Maxwell--Boltzmann distribution for a self-gravitating Newtonian gas, which is a stationary solution of the Vlasov--Poisson system. From this distribution, the corresponding Poisson--Boltzmann equation for the gravitational potential is deduced. The second variation of entropy reproduces the classical Antonov instability criterion: the critical energy is $E_c \simeq -0.335\,\frac{G M^2}{R}$, below which no local entropy maximum exists and the configuration becomes unstable (the so-called "gravothermal catastrophe"). In this work, we assume $E>E_c$, so all equilibria lie on the stable branch, and the Antonov instability does not affect the analysis. Specializing to spherical symmetry, we recover the classical equation for the isothermal gas sphere, as originally studied by Chandrasekhar, which has applications to the formation of red giant stars. We also derive the fundamental equation of hydrostatic equilibrium, the energy integral and the virial theorem directly from the stationary Vlasov--Poisson solution, demonstrating also that an isothermal gas exhibits negative specific heat. Furthermore, we show that an isothermal gas sphere of strictly constant density is an impossibility. This exposition emphasizes some of the deep connections and self-consistency between kinetic theory, statistical mechanics, and stellar structure, while highlighting formal mathematical aspects of classical astrophysical models.
\end{abstract}

\tableofcontents

\section{Isothermal gas sphere mechanics: introduction and historical
review}
\begin{quote}
\textit{I write not to teach others but to inform myself. What I write here is not my teaching but my study; it is not a lesson for others but for myself...yet if it happens that others take profit from it, let them take it.} \\
--- Michel de Montaigne, \textit{Essai}, Book II, Chapter 18, ``De L’affection Des Pères Aux Enfants'', 1580
\end{quote}
\noindent The study of self-gravitating systems, such as stars, globular clusters, and galaxies, has been a central topic in astrophysics since the early twentieth century. These systems are governed by long-range Newtonian interactions, leading to rich thermodynamic and dynamical behaviors that distinguish them from ordinary short-range systems. A particularly elegant framework for describing their evolution is provided by the \emph{Vlasov equation}, introduced by Anatoly Vlasov in 1938 \textbf{[1]}. This kinetic equation governs the time evolution of the one-particle distribution function $\mathcal{F}(\mathbf{x},\mathbf{v},t)$ in phase space under the self-consistent gravitational or electrostatic potential generated by the system itself. It is valid in the collisionless regime, appropriate for systems in which the evolution is dominated by long-range interactions rather than short-range collisions, and has been extensively applied to problems in astrophysics, cosmology, and plasma physics \textbf{[2--14]}.

In the context of a self-gravitating hydrogen gas, such as a diffuse molecular cloud, the Vlasov--Poisson formalism can be rigorously justified starting from first principles. Beginning with the Liouville equation for an $N$-body system interacting via the Newtonian potential, one obtains the BBGKY hierarchy \textbf{[15--18]}, which describes the evolution of $s$-particle distribution functions. In the mean-field limit, where $N \to \infty$ and two-body correlations become negligible compared with the collective gravitational field, the hierarchy closes at the level of the one-particle distribution function, yielding the Vlasov--Poisson system. Each particle then evolves in the smooth gravitational potential generated by the ensemble of all others. In the mean-field scaling, each pairwise potential is effectively taken as $1/N$ times the Newtonian potential, ensuring that the total force per particle remains finite as $N \to \infty$. This scaling preserves the extensive character of the total potential energy while producing a smooth mean-field potential in the continuum limit. \emph{This limit is applicable when the system is sufficiently dilute and large-scale so that gravitational interactions dominate over short-range collisions}. The Maxwell--Boltzmann distribution is a stationary solution in this regime. 

The validity of this mean-field reduction has been rigorously established in mathematical physics: Braun and Hepp \textbf{[19]} proved convergence of the empirical one-particle distribution to the Vlasov solution in the large-$N$ limit, Spohn \textbf{[20]} provided a detailed analysis of the limit and its implications for kinetic theory, and Binney and Tremaine \textbf{[21]} discussed the physical relevance in astrophysical systems. Together, these results justify the Vlasov--Poisson equation as a fundamental continuum description of large self-gravitating particle ensembles.

In the context of stellar structure, the \emph{isothermal gas sphere} model, first introduced by Emden \textbf{[25]} in 1907 and later developed extensively by Chandrasekhar \textbf{[26,27]}, provides a simplified but remarkably instructive description of a self-gravitating, spherically symmetric gas in hydrostatic equilibrium. Assuming an isothermal equation of state, the density profile satisfies a nonlinear Poisson equation whose dimensionless form is now known as the Emden--Chandrasekhar equation. The isothermal sphere has served as a classical model for stellar envelopes, globular clusters, and even the evolution of red giants \textbf{[28-36]}.

Beyond its astrophysical applications, the isothermal sphere occupies a central place in the \emph{statistical mechanics of self-gravitating systems}. Because gravity is a long-range, attractive, and non-extensive interaction, the usual thermodynamic framework breaks down: the canonical and microcanonical ensembles are no longer equivalent, specific heat can become negative, and stable equilibria may cease to exist below a critical energy. These features were first analyzed in a seminal paper by Antonov (1962), who investigated the thermodynamic stability of an isothermal self-gravitating gas confined within a spherical boundary \textbf{[37]}. By examining the second variation of the Boltzmann entropy under fixed mass and energy, Antonov showed that equilibrium configurations exist only for total energies $E > E_c \simeq -0.335\, G M^2 / R$, and that below this critical energy, no local entropy maximum is possible---a phenomenon now known as the \emph{Antonov instability} or \emph{gravothermal catastrophe}. Lynden-Bell \textbf{[38]} later extended this framework to include the concept of \emph{violent relaxation}, providing a statistical interpretation of how collisionless systems can approach quasi-equilibrium states described by the Vlasov equation. In this work, we always assume $E>E_c$, so that all equilibria lie on the stable branch, and the Antonov instability is not relevant.

The statistical mechanics of self-gravitating Newtonian gases therefore involves subtle interplays between dynamics, thermodynamics, and stability. Unlike conventional gases, self-gravitating systems can exhibit entropy maxima that are only \emph{metastable}, and their evolution can depend sensitively on the chosen ensemble (microcanonical versus
canonical). These difficulties highlight the need for a consistent kinetic-theoretic approach, where equilibrium distributions arise naturally from the underlying Vlasov dynamics rather than being postulated thermodynamically. These issues have been discussed in a number of works \textbf{[39--49]}.

The purpose of the present work is to derive the mechanics of isothermal gas spheres from first principles, starting directly from the stationary, collisionless Vlasov--Poisson equation. By maximizing the Boltzmann entropy under fixed mass and energy, we recover the Maxwell--Boltzmann distribution for a self-gravitating gas, which is shown to be a stationary solution of the Vlasov equation. From this distribution, the corresponding Poisson--Boltzmann equation for the gravitational potential follows immediately. Specializing to spherical symmetry, we obtain the classical Emden--Chandrasekhar equation for the isothermal gas sphere. We further show that the second variation of the entropy reproduces the classical Antonov stability criterion, thereby connecting the statistical-mechanical approach to the mean-field kinetic description. This unified derivation clarifies the conceptual and mathematical relationship between the Vlasov framework, entropy extremization, and the theory of isothermal spheres, linking kinetic theory, statistical mechanics, and stellar structure in a single coherent formalism.
\subsection{The classical Emden-Chandrasekhar isothermal gas sphere}
We begin with the fundamental equation of hydrostatic equilibrium for a self-gravitating Newtonian gas in a closed domain $\Omega\subset\mathbb{R}^{3}$ in general coordinates
\begin{equation}
\nabla P(\mathbf{x}) = - \rho(\mathbf{x}) \nabla \Phi(\mathbf{x}),
\end{equation}
where $P(\mathbf{x})$ is the pressure, $\rho(\mathbf{x})$ the density, and $\Phi(\mathbf{x})$ the gravitational potential. For an isothermal ideal gas at temperature T, the equation of state reads
\begin{equation}
P(\mathbf{x}) = \frac{k_B T}{m} \rho(\mathbf{x}),
\end{equation}
so that
\begin{equation}
\nabla \left( \frac{k_B T}{m} \rho(\mathbf{x}) \right) = - \rho(\mathbf{x}) \nabla \Phi(\mathbf{x})
\end{equation}
and
\begin{equation}
\frac{k_B T}{m} \nabla \rho(\mathbf{x}) = - \rho(\mathbf{x}) \nabla \Phi(\mathbf{x})
\end{equation}
Dividing both sides by $\rho(\mathbf{x}) > 0$, we obtain
\begin{equation}
\frac{\nabla \rho(\mathbf{x})}{\rho(\mathbf{x})} = - \frac{m}{k_B T} \nabla \Phi(\mathbf{x})
\end{equation}
so that
\begin{equation}
\frac{\nabla \rho(\mathbf{x})}{\rho(\mathbf{x})}\equiv \nabla \ln \rho(\mathbf{x}) = - \frac{m}{k_B T} \nabla \Phi(\mathbf{x}).
\end{equation}
This is an exact differential, which can be integrated along any path in $\mathbf{x}$-space or $\Omega$ from a reference point $\mathbf{x}_0$ (typically the center of the configuration) to an arbitrary point $\mathbf{x}$:
\begin{equation*}
\int_{\mathbf{x}_0}^{\mathbf{x}} \nabla \ln \rho(\mathbf{x}') \cdot d\mathbf{x}'
= - \frac{m}{k_B T} \int_{\mathbf{x}_0}^{\mathbf{x}} \nabla \Phi(\mathbf{x}') \cdot d\mathbf{x}'.
\end{equation*}
The left-hand side integrates to
\begin{equation*}
\ln \rho(\mathbf{x}) - \ln \rho(\mathbf{x}_0) = \ln \frac{\rho(\mathbf{x})}{\rho_0},
\end{equation*}
where $\rho_0 = \rho(\mathbf{x}_0)$, while the right-hand side gives
\begin{equation*}
- \frac{m}{k_B T} (\Phi(\mathbf{x}) - \Phi(\mathbf{x}_0)).
\end{equation*}
Hence
\begin{equation}
\rho(\mathbf{x})=\rho_{o}\exp\left(-\frac{m}{k_{B}T}\Phi(x)\right)\exp\left(-\frac{m}{k_{B}T}\Phi(\mathbf{x}_{o})
\right)
\end{equation}
Absorbing $\exp(-\frac{m}{k_{B}T}\Phi(\mathbf{x}_0)$ into the constant $\rho_0$, we then obtain the density distribution:
\begin{equation}
\rho(\mathbf{x}) = \rho_0 \exp\Big(-\frac{m}{k_B T} \Phi(\mathbf{x})\Big)
\end{equation}
Substituting this into Poisson's equation for the gravitational potential,
\begin{equation}
\Delta \Phi(\mathbf{x}) = 4 \pi G \rho(\mathbf{x}) = 4 \pi G \rho_0 \exp\Big[-\frac{m}{k_B T} \Phi(\mathbf{x})\Big],
\end{equation}
yields the general-coordinate form of the Poisson-Boltzmann equation.

Finally, assuming spherical symmetry $\Phi = \Phi(r)$, the Laplacian reduces to
\begin{equation*}
\Delta \Phi(r) = \frac{1}{r^2} \frac{d}{dr} \Big( r^2 \frac{d \Phi(r)}{dr} \Big),
\end{equation*}
so that the equation becomes the classical Emden-Chandrasekhar equation for an isothermal gas sphere:
\begin{equation*}
\frac{1}{r^2} \frac{d}{dr} \Big( r^2 \frac{d \Phi(r)}{dr} \Big) = 4 \pi G \rho_0 \exp\Big[- \frac{m}{k_B T} \Phi(r) \Big].
\end{equation*}
An \emph{isothermal gas sphere} is a spherically symmetric, self-gravitating gas distribution in hydrostatic equilibrium, with constant temperature $T$. The pressure $P$ and density $\rho$ satisfy the ideal gas law. For spherical symmetry, hydrostatic equilibrium reads
\begin{equation*}
\frac{dP}{dr} = - \rho(r) \frac{G M(r)}{r^2}, \quad
M(r) = \int_0^r 4 \pi r'^2 \rho(r') \, dr'.
\end{equation*}
Introducing the dimensionless potential
\begin{equation}
\Psi(\mathbf{x})= \frac{m}{k_B T} \big(\Phi(\mathbf{x}) - \Phi(0)\big),
\end{equation}
we have
\begin{align}
&\Phi(\mathbf{x}) = \Phi(0) + \frac{k_B T}{m} \Psi(\mathbf{x})\\&
\Delta \Phi(\mathbf{x}) = \frac{k_B T}{m} \Delta \Psi(\mathbf{x})
\end{align}
Substituting into Poisson's equation gives the dimensionful Emden-Chandrasekhar equation,
\begin{equation}
\Delta \Psi(\mathbf{x}) = \frac{4 \pi G m}{k_B T} \rho_0 e^{-\Psi(\mathbf{x})}.
\end{equation}
Assuming spherical symmetry, this reduces to the ordinary differential equation
\begin{equation}
\frac{1}{r^2} \frac{d}{dr} \Big( r^2 \frac{d \Psi(r)}{dr} \Big) = \frac{4 \pi G m}{k_B T} \rho_0 e^{-\Psi(r)},
\end{equation}
with boundary conditions
\begin{equation}
\Psi(0) = 0, \quad \frac{d \Psi(0)}{dr}(0) = 0.
\end{equation}
The density profile is
\begin{equation}
\rho(r) = \rho_0 \, e^{-\Psi(r)}.
\end{equation}
Substituting into Poisson's equation yields the classical \emph{Emden--Chandrasekhar} equation for an isothermal sphere \textbf{[26,27]}:
\begin{equation}
\frac{1}{r^2} \frac{d}{dr} \Big( r^2 \frac{d \Psi(r)}{dr} \Big)
= 4 \pi G \frac{m}{k_B T} \, \rho_0 \, e^{-\Psi(r)}.
\end{equation}
with boundary conditions $\Psi(0) = 0 $ and $\frac{d\Psi(r)}{dr}\Big|_{r=0} = 0 $, ensuring regularity at the center. Introducing the dimensionless radius
\begin{equation}
\xi = \frac{r}{r_0}, \qquad r_0 = \sqrt{\frac{k_B T}{4 \pi G m \rho_0}},
\end{equation}
the Emden--Chandrasekhar equation becomes
\begin{equation}
\frac{1}{\xi^2} \frac{d}{d\xi} \left( \xi^2 \frac{d \Psi(\xi)}{d\xi} \right) = e^{-\Psi(\xi)},
\end{equation}
which corresponds to a Lane--Emden equation of index $n = \infty$ or $\gamma=1$,\textbf{[26]}.

\subsection{Basic properties of the isothermal sphere}
Consider the dimensionful Chandrasekhar-Emden equation in spherical symmetry with the prescribed boundary conditions
\begin{equation}
\frac{1}{r^2}\frac{d}{dr}\Bigl(r^2 \frac{d \Psi(r)}{dr} \Bigr) = 4 \pi G \frac{m}{k_B T} \rho_0 e^{-\Psi(r)},
\qquad \Psi(0)=0, \quad \frac{d\Psi(r)}{dr}(0)=0,
\end{equation}
where $\Psi(r) = \frac{m}{k_B T} (\Phi(r)-\Phi(0))$ is the dimensionless potential and $\rho_0$ the central density.
\noindent \underline{Small-$r$ behavior}: Near the center, assume a quadratic ansatz
\begin{equation}
\Psi(r) = A r^2.
\end{equation}
Then
\begin{equation*}
\frac{d\Psi}{dr} = 2 A r, \qquad
\frac{d}{dr}\left(r^2 \frac{d\Psi}{dr}\right) = \frac{d}{dr} (2 A r^3) = 6 A r^2 \sim 6A
\end{equation*}
so that the left-hand side becomes $6 A$. Expanding the exponential on the right-hand side, $e^{-A r^2} \approx 1$ for small $r$, we find
\begin{equation*}
6 A \approx 4 \pi G \frac{m}{k_B T} \rho_0 \quad \implies \quad
A = \frac{2 \pi G m \rho_0}{3 k_B T}.
\end{equation*}
\begin{equation}
\Psi(r) \approx \frac{2 \pi G m \rho_0}{3 k_B T} r^2.
\end{equation}

\subsubsection{Asymptotic behavior of the isothermal sphere}
It is important to consider the asymptotic properties of the isothermal sphere for small and large $r$. Consider the dimensionless Emden--Chandrasekhar equation for an isothermal gas sphere,
\begin{equation}
\frac{1}{r^2}\frac{d}{dr}\Big(r^2 \frac{d\Psi(r)}{dr}\Big) = e^{-\Psi(r)}.
\end{equation}
\begin{enumerate}
\item The finite-mass case: If the right-hand side is negligible (e.g., $\rho(r) \to 0$ sufficiently rapidly at large $r$), the equation reduces to Laplace's equation,
\begin{equation*}
\frac{1}{r^2}\frac{d}{dr}\Big(r^2 \frac{d\Psi}{dr}\Big) = 0.
\end{equation*}
so that $\Psi(r)$ is harmonic. Integrating twice gives
\begin{equation*}
r^2 \frac{d\Psi}{dr} = C_1 \;\Rightarrow\; \frac{d\Psi}{dr} = \frac{C_1}{r^2} \;\Rightarrow\; \Psi(r) = -\frac{C_1}{r} + C_2,
\end{equation*}
which is the familiar Newtonian potential of a finite-mass distribution.
\item Isothermal sphere: For an isothermal gas, the density decays slowly:
\begin{equation*}
\rho(r) = \rho_c e^{-\Psi(r)} \sim r^{-2} \quad \text{as } r \to \infty.
\end{equation*}
Thus the right-hand side of the Emden equation cannot be neglected at large $r$:
\begin{equation*}
\frac{1}{r^2}\frac{d}{dr}\Big(r^2 \frac{d\Psi}{dr}\Big) \sim r^{-2}.
\end{equation*}
Assume an asymptotic form
\begin{equation*}
\Psi(r) \sim A \ln r + B.
\end{equation*}
Then
\begin{equation*}
\frac{d\Psi}{dr} = \frac{A}{r}, \qquad \frac{d}{dr}\big(r^2 \frac{d\Psi}{dr}\big) = \frac{d}{dr}(A r) = A,
\end{equation*}
so the left-hand side of the Emden equation behaves as
\begin{equation*}
\frac{1}{r^2}\frac{d}{dr}\big(r^2 \frac{d\Psi}{dr}\big) \sim \frac{A}{r^2}.
\end{equation*}
Equating powers with the right-hand side, $e^{-\Psi}=e^{-A\ln r+B}$, gives for large $r$
\begin{equation*}
\frac{A}{r^2} \sim r^{-2}\sim r^{-A} \quad \Rightarrow \quad A = 2.
\end{equation*}
Hence the isothermal gas sphere has the asymptotic behavior
\begin{equation*}
\Psi(r) \sim 2 \ln r + B, \qquad
\rho(r) \sim \rho_c r^{-2}.
\end{equation*}

\item Physical interpretation: The $1/r$ potential occurs only if the density decays fast enough for the RHS to be negligible. For the isothermal sphere, the slow $r^{-2}$ decay of $\rho(r)$ keeps the right-hand side significant, leading to \emph{logarithmic growth} of $\Psi$ instead of a $1/r$ fall-off. Consequently, the total mass diverges:
\begin{equation*}
M=M(R) = \int_0^R 4 \pi r^2 \rho(r) \, dr.
\end{equation*}
\begin{equation*}
M=M(R) \sim \int_0^R 4 \pi r^2 \cdot \frac{1}{r^2} \, dr = \int_0^R 4 \pi \, dr = 4 \pi R.
\end{equation*}
\begin{equation*}
\lim_{R \to \infty} M(R) = \lim_{R \to \infty} 4 \pi R = \infty.
\end{equation*}
Hence, the classical isothermal sphere has infinite total mass and so an \emph{isolated} isothermal gas sphere cannot exist. In practice, one truncates the sphere at a finite radius or uses models with faster-decaying densities. Near the center: $\Psi(r) \sim r^2$ (satisfying the boundary conditions). Far from the center: $\Psi(r) \sim 2 \ln r + \text{const}$ (giving slow density decay). The total mass then diverges, highlighting the idealized nature of the infinite isothermal sphere.
\end{enumerate}

\subsection{Bonnor--Ebert Spheres}
A \emph{Bonnor--Ebert sphere} \textbf{[50-55]} is a self-gravitating, isothermal gas sphere in hydrostatic equilibrium, truncated by an external pressure $P_{\rm ext}$. It generalizes the classical isothermal sphere by imposing a finite radius and mass. The sphere satisfies the usual criteria for hydrostatic equilibrium
\begin{equation}
\frac{dP}{dr} = - \rho(r) \frac{G M(r)}{r^2}, \qquad
M(r) = 4 \pi \int_0^r \rho(r') r'^2 dr',
\end{equation}
with $P(r) = \rho(r) k_B T / m$ for an isothermal gas of temperature $T$. The density may be expressed as
\begin{equation}
\rho(r) = \rho_c e^{-\Psi(r)}, \qquad
\Psi(r) = \frac{m}{k_B T} \big(\Phi(r) - \Phi(0)\big),
\end{equation}
so that the Poisson equation becomes
\begin{equation}
\frac{1}{r^2} \frac{d}{dr} \Big( r^2 \frac{d\Psi}{dr} \Big) = 4 \pi G \rho_c e^{-\Psi(r)}.
\end{equation}
Truncation by external pressure defines a finite radius $R$:
\begin{equation}
P(R) = P_{\rm ext} \quad \implies \quad \rho(R) = \frac{m P_{\rm ext}}{k_B T},
\end{equation}
giving a finite total mass
\begin{equation}
M = 4 \pi \int_0^R \rho(r) r^2 dr.
\end{equation}
A dimensionless form is obtained by introducing $\xi = r / r_0$ with
\begin{equation}
r_0 = \sqrt{\frac{k_B T}{4 \pi G m \rho_c}},
\quad \text{yielding} \quad
\frac{1}{\xi^2} \frac{d}{d\xi} \Big( \xi^2 \frac{d\Psi}{d\xi} \Big) = e^{-\Psi},
\quad 0 \le \xi \le \xi_{\rm out},
\end{equation}
with boundary conditions $\Psi(0)=0$, $\Psi'(0)=0$, and \(\Psi(\xi_{\rm out}) = -\ln(\rho_{\rm ext}/\rho_c)\). The sphere is stable to radial perturbations if
\begin{equation*}
\xi_{\rm out} < \xi_{\rm crit} \approx 6.45,
\end{equation*}
otherwise it is gravitationally unstable, an expression of the gravothermal catastrophe. Bonnor--Ebert spheres provide finite-mass, stable models of self-gravitating isothermal gas under external pressure. They are widely used \textbf{[50-55]}to model molecular cloud cores, protostellar condensations, and stellar cores in red giants. In the limit of vanishing external pressure, the Bonnor--Ebert sphere reduces to the classical, infinite-mass isothermal sphere, recovering the logarithmic density profile $\rho \sim 1/r^2$ at large radii discussed previously.

\subsection{Connection to Polytropes and the Lane--Emden Equation}
The isothermal gas sphere equation is closely related to the theory of \emph{polytropes} and the \emph{Lane--Emden equation}. A polytrope is a self-gravitating fluid in which the pressure $P$ is related to the density $\rho$ by
\begin{equation}
P = K \rho^{1 + 1/n},
\end{equation}
where $K$ is a constant and $n$ is the \emph{polytropic index}. For a spherically symmetric polytrope, combining hydrostatic equilibrium
\begin{equation}
\frac{dP}{dr} = - \rho \frac{G M(r)}{r^2}, \qquad M(r) = \int_0^r 4 \pi r^2 \rho(r) \, dr,
\end{equation}
with the polytropic equation of state leads to the classical \emph{Lane--Emden equation}\textbf{[26,56]}.
\begin{equation}
\frac{1}{\xi^2} \frac{d}{d\xi} \left( \xi^2 \frac{d \theta}{d\xi} \right) = - \theta^n,
\end{equation}
where $\theta(\xi)$ is the dimensionless density and $\xi$ is the dimensionless radius.

We can consider the isothermal limit of polytropes. An isothermal gas has constant temperature $T$, so again the ideal-gas law gives
\begin{equation}
P = \frac{k_B T}{m} \rho.
\end{equation}
Comparing with the polytropic relation
\begin{equation}
P = K \rho^{1 + 1/n},
\end{equation}
we see that the linear dependence on $\rho$ corresponds to
\begin{equation}
1 + \frac{1}{n} = 1 \quad \Rightarrow \quad n \to \infty,
\end{equation}
or equivalently, $\gamma = 1$. Hence an isothermal gas sphere is the $n \to \infty$ limit of a polytrope. Starting from the dimensionless Lane--Emden equation for polytropes:
\begin{equation}
\frac{1}{\xi^2} \frac{d}{d\xi}\left(\xi^2 \frac{d\theta}{d\xi}\right) = - \theta^n.
\end{equation}
Define a new function $\Psi(\xi)$ via
\begin{equation}
\theta(\xi) = e^{-\Psi(\xi)/n}.
\end{equation}
Then
\begin{equation}
\theta^n = \left(e^{-\Psi/n}\right)^n = e^{-\Psi}.
\end{equation}
Substituting into the Lane--Emden equation and taking the limit $n \to \infty$ gives the \emph{isothermal Lane--Emden equation} (Emden--Chandrasekhar equation):
\begin{equation}
\frac{1}{\xi^2} \frac{d}{d\xi} \left( \xi^2 \frac{d\Psi}{d\xi} \right) = e^{-\Psi}.
\end{equation}
Summary:
\begin{itemize}
    \item Polytrope: $P = K \rho^{1 + 1/n}$, Lane--Emden equation $\frac{1}{\xi^2} (\xi^2 \theta')' = - \theta^n$.
    \item Isothermal limit: $P \propto \rho$, $n \to \infty$, $\gamma = 1$, and the Lane--Emden equation reduces to $\frac{1}{\xi^2} (\xi^2 \Psi')' = e^{-\Psi}$.
\end{itemize}

\subsection{Homology and Scale Invariance}
There is a standard homology (scale) invariance of the Emden–Chandrasekhar equation \textbf{[26]}whereby the scaling changes the length and density scales.
Start from the dimensionless Emden–Chandrasekhar equation for an isothermal sphere:
\begin{equation}
\frac{1}{\xi^{2}}\frac{d}{d\xi}\!\Big(\xi^{2}\frac{d\Psi}{d\xi}\Big)=e^{-\Psi(\xi)}.
\end{equation}
Suppose \(\Psi(\xi)\) is a solution. Define, for a constant \(A>0\),
\begin{equation}
\widetilde{\Psi(\xi)}:=\Psi(A\xi)-2\ln A.
\end{equation}
We show that \(\widetilde{\Psi}\) satisfies the same equation.

Let \(u=A\xi\). Then
\begin{equation}
\frac{d}{d\xi}\Psi(A\xi)=A\frac{d\Psi}{du}, \qquad
\frac{d^2}{d\xi^2}\Psi(A\xi)=A^2\frac{d^2\Psi}{du^2}.
\end{equation}
Hence
\begin{equation}
\frac{1}{\xi^{2}}\frac{d}{d\xi}\!\Big(\xi^{2}\frac{d}{d\xi}\Psi(A\xi)\Big)
= \frac{1}{\xi^{2}}\Big(2\xi A\frac{d\Psi}{du}+\xi^{2}A^2\frac{d^2\Psi}{du^2}\Big)
= A^2\frac{1}{u^{2}}\frac{d}{du}\!\Big(u^{2}\frac{d\Psi}{du}\Big).
\end{equation}
Since \(\Psi(u)\) satisfies the Emden–Chandrasekhar equation,
\begin{equation}
\frac{1}{u^{2}}\frac{d}{du}\!\Big(u^{2}\frac{d\Psi}{du}\Big)=e^{-\Psi(u)},
\end{equation}
it follows that
\begin{equation}
\frac{1}{\xi^{2}}\frac{d}{d\xi}\!\Big(\xi^{2}\frac{d}{d\xi}\Psi(A\xi)\Big)=A^{2}e^{-\Psi(A\xi)}.
\end{equation}
Now compute the right-hand side for \(\widetilde{\Psi}\):
\begin{equation}
e^{-\widetilde{\Psi(\xi)}}=e^{-[\Psi(A\xi)-2\ln A]}=A^{2}e^{-\Psi(A\xi)}.
\end{equation}
Comparing both sides, we find
\begin{equation}
\frac{1}{\xi^{2}}\frac{d}{d\xi}\!\Big(\xi^{2}\frac{d}{d\xi}\Psi(A\xi)\Big)=e^{-\widetilde{\Psi}(\xi)}.
\end{equation}
Subtracting the constant \(-2\ln A\) does not affect the derivatives, so the same equality holds with \(\widetilde{\Psi}\) in place of \(\Psi(A\xi)\). Therefore,
\begin{equation}
\boxed{\;\frac{1}{\xi^{2}}\frac{d}{d\xi}\!\Big(\xi^{2}\frac{d\widetilde{\Psi}}{d\xi}\Big)=e^{-\widetilde{\Psi}(\xi)}\;.}
\end{equation}

\subsection{Reduction of the Emden--Chandrasekhar equation to a first--order system}
The dimensionless Emden--Chandrasekhar equation for an isothermal sphere is
\begin{equation}
\frac{1}{\xi^{2}}\frac{d}{d\xi}\!\Big(\xi^{2}\frac{d\Psi}{d\xi}\Big)=e^{-\Psi(\xi)}.
\end{equation}
Define
\begin{equation}
v(\xi) := \xi^{2} \frac{d\Psi}{d\xi}.
\end{equation}
Then the equation reduces to the first-order system
\begin{equation}
\frac{d\Psi}{d\xi} = \frac{v}{\xi^{2}}, \qquad \frac{dv}{d\xi} = \xi^{2} e^{-\Psi}.
\end{equation}
Introduce scale-invariant homology variables
\begin{equation}
u(\xi) := \xi \frac{d\Psi}{d\xi} = \frac{v}{\xi}, \qquad
w(\xi) := \xi^{2} e^{-\Psi}.
\end{equation}
Then the system becomes autonomous in $(u,w)$:
\begin{equation}
\frac{du}{d\ln\xi} = -u + w, \qquad
\frac{dw}{d\ln\xi} = 2w - u w.
\end{equation}
A reduction to the phase plane is now possible. Eliminating $\ln\xi$ gives a single first-order ODE:
\begin{equation}
\frac{dw}{du} = \frac{w(2-u)}{w-u}.
\end{equation}
Once a solution curve $w(u)$ is found, one recovers $\xi$ via
\begin{equation}
d\ln\xi = \frac{du}{-u + w(u)}, \qquad \Psi' = \frac{u}{\xi}.
\end{equation}
This reduction, due to Chandrasekhar \textbf{[26]} trades the radial dependence for an autonomous flow in the homology phase plane. The definitions of $u$ and $w$ remove the explicit $\xi$-dependence and reflect the homology scaling. Phase-plane analysis of $(u,w)$ allows study of solution families, critical points, and stability properties of isothermal spheres without directly integrating the second-order ODE. Details are in \textbf{[26]}.

\section{Application to understanding the formation of red giant stars}
The isothermal gas sphere equation has a key application in astrophysics--the formation of red giant stars from progenitor Main-Sequence stars which have exhausted their thermonuclear fuel. With the exhaustion of the hydrogen fuel in its central core, a main-sequence star like the Sun evolves a 3-component structure: a helium-rich isothermal core, the product of billions of years of PP cycle hydrogen burning fusion reactions; a thin outer hydrogen-burning shell; an outer envelope of gas retaining most of the original chemical composition of the star. It was Gamow in 1938 who first suggested that the central helium core will become isothermal or at least close to isothermal \textbf{[57]}. Since $dL(r)/dr \sim \epsilon(r)$ for the luminosity and energy density, then for $L(0)\sim 0$ and $ \epsilon(r)\sim 0$ it follows that $L(r\le r_{c})\sim 0$. The temperature gradient is one of the basic equations of stellar structure [refs].

\begin{equation}
\frac{dT(r)}{dr}\sim L(r)
\end{equation}
so that $dT/dr\to 0$ for $L\sim 0$. Hence T is (very nearly) constant or isothermal in this region.

In the original Schönberg--Chandrasekhar (1942) paper \textbf{[30]} the core of an evolving star was modeled as \emph{effectively isothermal}, corresponding to a polytrope with $\gamma=1$ ($n\to\infty$), while the envelope was modeled as a polytrope with $n=3$ ($\gamma=4/3$). This formulation allowed them to derive the \emph{Schönberg--Chandrasekhar limit}, i.e., the maximum fractional mass of an isothermal helium core that can be supported by a non-isothermal envelope, without explicitly solving the full Emden--Chandrasekhar equation. An isothermal core develops only if the star is in the mass range $1.5M_{\odot}\le M\le 6 M_{\odot}$ with degeneracy contributing to the pressure and maintenance of hydrostatic equilibrium. Above this mass, contraction is much greater with a significant release of gravitational energy and an isothermal core never forms.

Consider a star that has exhausted its hydrogen fuel in its core. Let the total mass of the star satisfy $M_o \le M \le 6\,M_\odot $and let the helium-rich core initially be approximately isothermal, embedded within a polytropic envelope. Define the core-to-star mass ratio
\begin{equation}
\chi = \frac{\mathcal{M}}{M},~~\chi_{*}=\frac{\mathcal{M}_{*}}{M}
\end{equation}
where $\mathcal{M}_*$ is the maximum core mass that can remain isothermal and in hydrostatic equilibrium within the envelope. If the actual core mass exceeds this fraction, $\chi > \chi_*$, the isothermal core can no longer support itself against gravity. As a result, the core contracts and heats, inducing a temperature gradient. Once the temperature is sufficiently high, triple-alpha helium fusion can ignite, overcoming the "Berylium bottleneck" of thermonuclear reactions. This defines the \emph{Schoenberg-Chandrasekhar limit} for the core mass fraction. The isothermal core satisfies hydrostatic equilibrium
\begin{equation}
\frac{dP}{dr} = - \frac{G M(r) \rho(r)}{r^2}, \quad P = \rho \frac{k_B T}{\mu m_H},
\end{equation}
while the surrounding envelope can be modeled as a polytrope:
\begin{equation}
P = K \rho^{1 + 1/n}, \quad n \sim 3 \text{ for a radiative envelope}.
\end{equation}
The Chandrasekhar form of the Poisson-Boltzmann equation for a spherically symmetric isothermal gas sphere is
\begin{equation}
\frac{1}{r^2}\frac{d}{dr}\left( r^2 \frac{d\Psi}{dr} \right) = 4 \pi G \rho_c e^{-\Psi(r)},
\end{equation}
where
\begin{equation}
\Psi(r) = \beta m \bigl( \Phi(r) - \Phi(0) \bigr), \quad \beta = \frac{1}{k_B T}, \quad \rho_c = \rho(0).
\end{equation}
By matching the isothermal core and polytropic envelope at the interface, continuity of pressure and density leads to the SC limit:
\begin{equation}
\chi_* = \frac{\mathcal{M}_*}{M} \sim 0.37 \quad \text{for an } n=3 \text{ envelope}.
\end{equation}
Thus, the SC limit provides a quantitative criterion for the onset of core contraction and helium ignition in post-main-sequence stellar evolution.
In this core, the radial temperature gradient is nearly vanishing.

The core-envelope structure is then characterized by different polytropic indices:
\begin{equation}
n_{\rm core} \to \infty \quad (\gamma_{\rm core}=1), \qquad n_{\rm env} = 3 \quad (\gamma_{\rm env}=4/3).
\end{equation}
Later authors \textbf{[36]} applied the full isothermal sphere solution to model the quasi-isothermal cores of red giants more quantitatively. These studies \textbf{[30-36]}emphasize that, although the SC limit captures the essential stability physics, solving the Emden--Chandrasekhar equation allows detailed predictions for the core density and pressure profiles, core radius, and the approach to hydrostatic equilibrium.

\subsection{Detailed Derivation of the Schoenberg--Chandrasekhar Limit}
The Emden--Chandrasekhar Equation can be applied in more detail to see how an isothermal, non-burning stellar core of helium forms within a main-sequence star which is exhausting its hydrogen thermonuclear fuel. Consider an isothermal non-burning stellar core with
\begin{equation}
P = a^2 \rho, \qquad a^2 = \frac{k_B T}{\mu_c m_p}.
\end{equation}
Define dimensionless variables:
\begin{equation}
\Phi(\xi) \equiv -\ln\frac{\rho(r)}{\rho_c},
\qquad
\xi = \frac{r}{r_0},
\qquad
r_0 = \frac{a}{\sqrt{4\pi G \rho_c}}.
\end{equation}
Then hydrostatic equilibrium yields the Emden--Chandrasekhar ODE:
\begin{equation}
\frac{1}{\xi^2} \frac{d}{d\xi}\left(\xi^2 \frac{d\Phi}{d\xi}\right) = e^{-\Phi},
\qquad
\Phi(0) = 0, \quad \Phi'(0) = 0.
\end{equation}
The core mass inside a radius $r_b = \xi_b r_0$ is
\begin{equation}
\mathcal{M} = 4\pi \int_0^{r_b} \rho(r) r^2 dr
= 4 \pi \rho_c r_0^3 \int_0^{\xi_b} e^{-\Phi(\xi)} \xi^2 d\xi
\equiv \frac{a^3}{G^{3/2} (4\pi)^{1/2} \rho_c^{1/2}} \mathrm{J}(\xi_b),
\end{equation}
where
\begin{equation}
\mathrm{J}(\xi_b) \equiv \int_0^{\xi_b} e^{-\Phi(\xi)} \xi^2 d\xi.
\end{equation}
\begin{enumerate}
\item\underline{Polytropic Envelope}:
Assume the envelope obeys a polytropic equation of state:
\begin{equation}
P = K \rho^{1 + 1/n}.
\end{equation}
For a polytropic star of index $n$, the total mass is
\begin{equation}
M = 4 \pi \omega_n \left( \frac{K}{G} \right)^{3/2},
\qquad
\omega_n \equiv -\xi_1^2 \theta'(\xi_1),
\end{equation}
where $\theta(\xi)$ solves the Lane--Emden equation
\begin{equation}
\frac{1}{\xi^2} \frac{d}{d\xi} \left( \xi^2 \frac{d\theta}{d\xi} \right) = -\theta^n,
\quad
\theta(0)=1, \quad \theta'(0)=0,
\end{equation}
and $\xi_1$ is the first zero of $\theta(\xi)$.

\item\underline{Matching at the Core--Envelope Boundary}:
At the boundary $r_b$ between core and envelope, the pressures must match:
\begin{equation}
P_{\rm core}(r_b) = P_{\rm env}(r_b)
\quad \Rightarrow \quad
a^2 \rho_b = K \rho_b^{1 + 1/n}.
\end{equation}
Solve for $K$ in terms of core quantities:
\begin{equation}
K = a^2 \rho_b^{-1/n}.
\end{equation}
Substitute $K$ into the polytrope mass formula:
\begin{equation}
M = 4 \pi \omega_n \frac{a^3}{G^{3/2}} \rho_b^{-1/2}.
\end{equation}
Use the relation $\rho_b = \rho_c e^{-\Phi(\xi_b)}$ from the isothermal solution:
\begin{equation}
\boxed{
M = 4 \pi \omega_n \frac{a^3}{G^{3/2}} \rho_c^{-1/2} e^{\Phi(\xi_b)/2}.
}
\end{equation}

\item\underline{Core Mass Fraction}:

Divide the core mass by the total mass:
\begin{equation}
\frac{\mathcal{M}}{M} = \frac{\mathrm{J}(\xi_b) e^{\Phi(\xi_b)/2}}{(4 \pi)^{3/2} \omega_n}.
\end{equation}

This is the exact expression for the core mass fraction as a function of the isothermal Emden solution and the Lane--Emden constants.

\item\underline{Canonical Value $n=3$ Radiative Envelope}:
For a radiative envelope with $n=3$:
\begin{equation}
\xi_1 \simeq 6.89685, \qquad \omega_3 \simeq 2.01824.
\end{equation}
Numerically solving the isothermal Emden equation and maximizing
$\frac{M_c}{M}(\xi_b)$ over admissible $\xi_b$ gives the Schoenberg--Chandrasekhar limit:
\begin{equation}
\boxed{
\mathcal{F}_{\rm SC} \equiv \left( \frac{M_c}{M} \right)_{\rm max} \simeq 0.08 - 0.12 \approx 0.10.
}
\end{equation}
\end{enumerate}
\section*{Red giant formation}: When the helium core exceeds $\sim 10\%$ of the total stellar mass, the configuration becomes unstable: the core contracts under gravity while the envelope expands, producing a red giant. In low-mass stars ($M \sim 1~M_\odot$), the core becomes electron-degenerate as it contracts. Once temperatures reach $T \gtrsim 10^8~\mathrm{K}$, helium ignites via the triple-alpha process,
\begin{equation}
\mathrm{^4He + ^4He \;\longrightarrow\; ^8Be}, \qquad
^8\mathrm{Be + ^4He \;\longrightarrow\; ^{12}C + \gamma},
\end{equation}
releasing $\sim 7.275~\mathrm{MeV}$ per carbon nucleus. In degenerate cores, the ignition is thermally runaway, producing a brief but intense \emph{helium flash}, which lifts the degeneracy and establishes stable core helium burning. Meanwhile, hydrogen fusion continues in a shell surrounding the helium core, providing most of the star's luminosity during the red giant phase. The outer envelope responds to the increased core luminosity by expanding and cooling, causing the star to swell and its surface temperature to drop, giving the characteristic large radius and reddish color of a red giant (typical radii $\sim 50$–100~$R_\odot$ for Sun-like progenitors). Well-known examples of red giants with Sun-like progenitors include \emph{Arcturus} (Alpha Boötis) and \emph{Aldebaran} (Alpha Tauri).

In contrast, much more massive stars ($M \gtrsim 8~M_\odot$) do not develop strongly degenerate or nearly isothermal helium cores as gravitational compression is much greater. Helium ignition occurs gradually, without a helium flash, and the cores are typically convective, leading to successive stages of nuclear burning. These stars become red supergiants rather than ordinary red giants, with enormous radii ($\sim 500$–1000~$R_\odot$); well-known examples include \emph{Betelgeuse} (Alpha Orionis) and \emph{Antares} (Alpha Scorpii), which will eventually end their lives as core-collapse supernovae. Low-mass red giants, after exhausting their core helium and shedding their outer envelopes, leave behind collapsed carbon-oxygen 'cores' or white dwarfs supported by electron degeneracy pressure. This will be the fate of the Sun.

\section{Isothermal self-gravitating Newtonian gases and the Vlasov-Poisson formalism}
In this section we derive the classical Emden-Chandrasekhar equation starting from first principles beginning with the stationary Vlasov-Poisson equation for a self-gravitating Newtonian gas in a support domain $\Omega$.

\begin{defn}[\textbf{Phase space and distribution function}]
Let $\Omega\subset\mathbb{R}^3$ denote the spatial domain. The single-particle phase space of the gas of $N$ particles of mass $m$ is $\Gamma = \Omega \times \mathbb{R}^3 $ with points $\mathbf{Z}=(\mathbf{x},\mathbf{p})$. The phase space integral measure is $d\mathbf{Z}=d^{3}xd^{3}p$. An admissible distribution function is a nonnegative measurable function $\mathcal{F} : \Gamma \to \mathbb{R}^+ $,
satisfying $\mathcal{F}(\mathbf{x},\mathbf{p})=0$ for $\mathbf{x}\in\partial\Omega$. The total mass  is
\begin{equation}
M = \int_{\Gamma} m \mathcal{F}(\mathbf{x},\mathbf{p}) \, d^3x\, d^3p \equiv \int_{\Omega}\int_{\mathbb{R}^{3}} m \mathcal{F}(\mathbf{x},\mathbf{p}) \, d^3x\, d^3p\equiv
\int_{\Gamma}m \mathcal{F}(\mathbf{Z})\,d\mathbf{Z}< \infty,
\end{equation}
and $M=mN$. The particle number $N$, density $\rho$ and pressure $P$ integrals are
\begin{align}
&N=\int_{\Gamma}\mathcal{F}(\mathbf{x},\mathbf{p})\,d^{3}x\,d^{3}p,\\
&\rho(\mathbf{x})=\int_{\mathbb{R}^{3}}m \mathcal{F}(\mathbf{x},\mathbf{p})\,d^{3}p,\\
&P(\mathbf{x})=\frac{1}{3}\int_{\mathbb{R}^{3}}\frac{|\mathbf{p}|^{2}}{m} \mathcal{F}(\mathbf{x},\mathbf{p})\,d^{3}p,
\end{align}
with the factor of $\tfrac{1}{3}$ for an isotropic gas.
\end{defn}

\begin{defn}
We consider the Vlasov-Poisson system on a bounded domain $\Omega \subset \mathbb{R}^3$ in space, with phase-space distribution $\mathcal{F} \in C^1(\Omega \times \mathbb{R}^3)$ and gravitational potential $\Phi \in C^2(\overline{\Omega})$, assuming $\mathcal{F}$ vanishes at the boundary of $\Omega$. The \emph{non-relativistic Vlasov equation} (collisionless Boltzmann equation) for the distribution function $ \mathcal{F}(\mathbf{Z})=\mathcal{F}(\mathbf{x},\bm{p}, t) $ in phase space is
\begin{equation}
\frac{\partial \mathcal{F}}{\partial t} + \frac{\bm{p}}{m} \cdot \nabla^{(x)} \mathcal{F} - m \, \nabla^{(x)} \Phi({x},t) \cdot \nabla^{(p)} \mathcal{F} = 0,~~\mathbf{x}\in\Omega,\mathbf{p}\in\mathbb{R}^{3},
\end{equation}
and $\mathcal{F}=0$ for $\mathbf{x}\in\partial\Omega$. The Newtonian potential satisfies the usual Poisson equation
\begin{equation}
\Delta \Phi(\mathbf{x},t) = 4 \pi G \rho(\mathbf{x},t).
\end{equation}
The full component form is
\begin{equation}
\frac{\partial \mathcal{F}(\mathbf{x},\mathbf{p},t)}{\partial t}+\sum_i \frac{p_i}{m}\nabla_i^{(x)} \mathcal{F}(\mathbf{x},\mathbf{p},t) - \sum_i m \, \nabla_i^{(x)}\Phi(\mathbf{x},t)\nabla_i^{(p)}\mathcal{F}(\mathbf{x},\mathbf{p},t) = 0.
\end{equation}
\end{defn}

The Vlasov-Poisson equation can be derived from the fundamental Liouville equation, applying Hamilton's equations and the BBGKY hierarchy \textbf{[15-18]}. The Vlasov-Poisson equation
\begin{equation}
\frac{\partial \mathcal{F}}{\partial t} + \frac{\mathbf{p}}{m} \cdot \nabla_x \mathcal{F} - m \, \nabla_x \Phi(\mathbf{x}) \cdot \nabla_p \mathcal{F} = 0,
\end{equation}
with the Newtonian potential $\Phi$ satisfying
\begin{equation}
\Delta \Phi(\mathbf{x}) = 4 \pi G \rho(\mathbf{x}), \quad \rho(\mathbf{x}) = m \int \mathcal{F}(\mathbf{x},\mathbf{p}) \, d^3p,
\end{equation}
can also be interpreted as describing a \emph{single particle moving under the mean gravitational field} produced by the remaining $N-1$ particles.
In the limit $N \to \infty$, individual particle fluctuations become negligible, and the particle effectively interacts only with the smooth, self-consistent potential $\Phi(\mathbf{x})$. The mean-field formulation of the Vlasov-Poisson equation is powerful because it reduces the high-dimensional $N$-body problem to the evolution of a single particle in a self-consistent gravitational potential. This approach captures the collective dynamics and allows rigorous analysis of phenomena such as gravitational collapse, stability, and phase transitions in self-gravitating systems.

\begin{rem}
For a self-gravitating system of $N$ particles of mass $m$ interacting via Newtonian gravity, the canonical partition function
\begin{equation}
Z_N = \frac{1}{h^{3N}N!} \int_{\Gamma^N} \exp\Big[-\beta H_N(\mathbf{x}_1,\dots,\mathbf{x}_N;\mathbf{p}_1,\dots,\mathbf{p}_N)\Big]\, d^{3N}x\, d^{3N}p
\end{equation}
diverges due to the unbounded nature of the gravitational potential which has the Hamiltonian
\begin{equation}
H = \sum_{i=1}^N \frac{|\mathbf{p}_i|^2}{2m} - G \sum_{1\le i<j\le N} \frac{m^2}{|\mathbf{x}_i - \mathbf{x}_j|}.
\end{equation}
This divergence arises from configurations where two or more particles approach arbitrarily close distances, making the potential energy arbitrarily negative.
As a consequence, a canonical ensemble description fails for self-gravitating systems \textbf{[46-49]}. Our use of the Vlasov--Poisson framework circumvents this issue by considering the \emph{mean-field, collisionless limit}, where the stationary distribution $\mathcal{F}$ is well-defined and normalized over phase space:
\begin{equation}
\int_{\Gamma} \mathcal{F}_{E}(\mathbf{x},\mathbf{p})\, d^3x\,d^3p = N.
\end{equation}
Although the canonical ensemble is ill-defined for a self-gravitating gas due to divergence of the partition function, the Vlasov-Poisson equation remains valid because it describes the collisionless, mean-field evolution of the system. Stationary solutions depend on integrals of motion (Jeans theorem) and do not require canonical equilibrium.
\end{rem}

If the system is in a stationary state, then $\frac{\partial \mathcal{F}}{\partial t} = 0$, and the Vlasov equation reduces to
\begin{equation}
\frac{\bm{p}}{m} \cdot \nabla^{(x)} \mathcal{F} - m \, \nabla^{(x)} \Phi(\mathbf{x}) \cdot \nabla^{(p)} \mathcal{F} = 0.
\end{equation}
Stationary solutions will be denoted $\mathcal{F}_{E}(\mathbf{x},\mathbf{p})$ so that
\begin{equation}
\frac{\bm{p}}{m} \cdot \nabla^{(x)} \mathcal{F}_{E} - m \, \nabla^{(x)} \Phi(\mathbf{x}) \cdot \nabla^{(p)} \mathcal{F}_{E} = 0.
\end{equation}
This is the \emph{stationary Vlasov equation}, which states that the distribution function is constant along phase-space trajectories defined by the Hamiltonian system
\begin{align}
\dot{\mathbf{x}} &= \frac{\bm{p}}{m}, \\
\dot{\bm{p}} &= - m \nabla^{(x)} \Phi(\mathbf{x}).
\end{align}
Equivalently, $\frac{d\mathcal{F}_{E}}{dt} = 0$, so $\mathcal{F}_{E}$ must depend only on the integrals of motion. This is stated formally as Jeans theorem.

\begin{thm}[Jeans Theorem]
Any steady-state (time-independent) solution of the collisionless Boltzmann equation can be expressed as a function of the integrals of motion of the system. That is, if $I_1, I_2, \dots$ are integrals of motion, then
\begin{equation}
\mathcal{F}_{E}(\mathbf{x},\bm{p}) = F(I_1, I_2, \dots).
\end{equation}
\end{thm}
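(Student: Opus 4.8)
The plan is to read the steady-state collisionless Boltzmann equation as a first-order linear homogeneous PDE and solve it by the method of characteristics, identifying its characteristic curves with the phase-space orbits and its first integrals with the integrals of motion. The starting point is the observation, already recorded above, that the stationary Vlasov equation
\begin{equation}
\frac{\bm{p}}{m}\cdot\nabla^{(x)}\mathcal{F}_{E} - m\,\nabla^{(x)}\Phi(\mathbf{x})\cdot\nabla^{(p)}\mathcal{F}_{E} = 0
\end{equation}
is precisely the statement that the directional derivative of $\mathcal{F}_E$ along the Hamiltonian vector field $X_H = \frac{\bm{p}}{m}\cdot\nabla^{(x)} - m\,\nabla^{(x)}\Phi\cdot\nabla^{(p)}$ vanishes. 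Writing $H(\mathbf{x},\bm{p}) = \tfrac{|\bm{p}|^2}{2m} + m\Phi(\mathbf{x})$, this is equivalently the vanishing Poisson bracket $\{\mathcal{F}_E, H\} = 0$, i.e. $\tfrac{d}{dt}\mathcal{F}_E = 0$ along every solution of the Hamiltonian system $\dot{\mathbf{x}} = \bm{p}/m$, $\dot{\bm{p}} = -m\nabla^{(x)}\Phi$. Thus $\mathcal{F}_E$ is constant on each phase-space orbit.

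Second, I would invoke the classical theory of first-order linear PDEs: the general solution of $X_H\mathcal{F}_E = 0$ is an arbitrary function of a complete, functionally independent set of first integrals of the associated characteristic system, which here is exactly the Hamiltonian flow. Concretely, I would use the flow-box (straightening) theorem. At any regular point $\mathbf{Z}_0 = (\mathbf{x}_0,\bm{p}_0)$ where $X_H(\mathbf{Z}_0)\neq 0$, there exist local coordinates $(s, c_1, \dots, c_{2n-1})$ in which $X_H = \partial/\partial s$; in these coordinates the PDE collapses to $\partial\mathcal{F}_E/\partial s = 0$, so $\mathcal{F}_E$ depends only on the orbit-labelling coordinates $c_1, \dots, c_{2n-1}$. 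These labels are, by construction, constant along the flow, hence are (local) integrals of motion. Identifying them with $I_1, I_2, \dots$ yields $\mathcal{F}_E = F(I_1, I_2, \dots)$ as claimed.

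Third, I would pin down which integrals actually appear. Since the system is autonomous, the energy $H$ is always conserved and supplies $I_1 = H$; any continuous symmetry of $\Phi$ furnishes further integrals — for the spherically symmetric case of interest the three components of angular momentum $\mathbf{L} = \mathbf{x}\times\bm{p}$ are conserved, giving additional candidates $I_2,\dots$. A maximal functionally independent subset of these serves as the orbit labels in the flow-box construction, completing the representation.

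The hard part will be the \emph{global} and \emph{single-valued} existence of the integrals, which is where Jeans theorem is genuinely delicate. The straightening theorem is purely local and breaks down at equilibria of the field, where $\bm{p}=0$ and $\nabla^{(x)}\Phi=0$ simultaneously; moreover, globally defined integrals need not be single-valued, and on invariant tori carrying orbits with incommensurate frequencies a trajectory is dense, forcing any continuous integral to be constant there (the familiar distinction between the \emph{strong} and \emph{weak} forms of the theorem). For the present purposes this obstruction is immaterial: the stationary solution we shall construct by entropy extremisation is the Maxwell--Boltzmann form $\mathcal{F}_E(\mathbf{x},\bm{p}) \propto \exp(-\beta H(\mathbf{x},\bm{p}))$, manifestly a single-valued function of the globally defined energy integral $H$ alone. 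I would therefore state the general result in its honest local form and then specialise to the energy-dependent branch, on which the representation $\mathcal{F}_E = F(H)$ holds everywhere.
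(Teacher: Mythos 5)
Your proposal is correct, but it takes a genuinely different (and more substantial) route than the paper, which in fact offers no proof at all: the theorem is stated bare, its entire justification being the one-line remark in the preceding paragraph that a stationary $\mathcal{F}_{E}$ satisfies $\frac{d\mathcal{F}_{E}}{dt}=0$ along the Hamiltonian trajectories $\dot{\mathbf{x}}=\bm{p}/m$, $\dot{\bm{p}}=-m\nabla^{(x)}\Phi$, ``so $\mathcal{F}_{E}$ must depend only on the integrals of motion.'' You supply the argument the paper elides: recasting the stationary Vlasov equation as $\{\mathcal{F}_{E},H\}=0$ with $H=\tfrac{|\bm{p}|^{2}}{2m}+m\Phi$, solving the resulting first-order linear PDE by characteristics, and invoking the flow-box theorem to obtain, near any regular point of $X_{H}$, a representation of $\mathcal{F}_{E}$ in terms of the $2n-1$ orbit-labelling first integrals. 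Your closing paragraph is the most valuable part and addresses exactly what the paper's informal statement glosses over: the straightening is purely local, fails at equilibria of the field, and on invariant tori carrying orbits with incommensurate frequencies any continuous integral must be constant, which is the classical distinction between the weak and strong forms of Jeans theorem; you correctly observe that this obstruction is immaterial here because the equilibrium actually constructed is the Maxwell--Boltzmann branch $\mathcal{F}_{E}=F(H)$, a single-valued function of the globally defined energy alone. One remark: the weak form admits an even shorter, fully global proof that sidesteps all of this --- a stationary $\mathcal{F}_{E}$ is \emph{itself} an integral of motion, so trivially $\mathcal{F}_{E}=F(I_{1})$ with $I_{1}=\mathcal{F}_{E}$; your flow-box route costs the global caveats you identify but buys the stronger local statement with a complete set of functionally independent labels, which the one-line trick does not give.
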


In any potential, the \emph{energy per particle} is conserved:
\begin{equation}
E = \frac{|\bm{p}|^2}{2m} + m \Phi(\mathbf{x}).
\end{equation}
Thus, one class of stationary solutions is
\begin{equation}
\mathcal{F}(\mathbf{x},\bm{p}) = F(E).
\end{equation}

If the potential is spherically symmetric, the \emph{angular momentum vector} $ \mathbf{L} = \mathbf{x} \wedge \bm{p}$ is conserved. In this case
\begin{equation}
\mathcal{F}(\mathbf{x},\bm{p}) = \mathcal{F}(E, |\mathbf{L}|).
\end{equation}
In an axisymmetric potential, the component $L_z$ of the angular momentum is conserved, so
\begin{equation}
\mathcal{F}(\mathbf{x},\bm{p}) = \mathcal{F}(E, L_z).
\end{equation}
Jeans theorem implies that stationary distribution functions $\mathcal{F}_{E}$ cannot be chosen arbitrarily, but must be built from conserved quantities. This principle underlies the construction of stellar equilibrium models such as the isothermal sphere and polytropes. For instance, the Maxwell--Boltzmann distribution,
\begin{equation}
\mathcal{F}(\mathbf{x},\bm{p}) \propto \exp(-\beta E),
\end{equation}
is of the form $\mathcal{F}(E)$, fully consistent with Jeans theorem. It should also be emphasized that Jeans theorem does not guarantee \emph{stability} of an equilibrium--the theorem only characterizes the form of equilibria but not their dynamical robustness.

From Jeans theorem one can guess the correct ansatz for the equilibrium solution and quickly check it. The same result is then derived in detail by maximising the Boltzmann entropy.

\begin{lem}[Stationary Maxwell--Boltzmann Distribution]
Let $\Phi(\mathbf{x})$ be a time-independent gravitational potential, and define the
single-particle energy
\begin{equation}
E(\mathbf{x},\mathbf{p}) = \frac{|\mathbf{p}|^2}{2m} + m \Phi(\mathbf{x}).
\end{equation}
Then the ansatz for Maxwell--Boltzmann distribution
\begin{equation}
\mathcal{F}_{E}(\mathbf{x},\mathbf{p}) = A \exp\big[-\beta E(\mathbf{x},\mathbf{p})\big]
\end{equation}
is a stationary solution of the Vlasov--Poisson equation:
\begin{equation}
\frac{\mathbf{p}}{m} \cdot \nabla_{\mathbf{x}} f - m\nabla_{\mathbf{x}} \Phi \cdot \nabla_{\mathbf{p}} f = 0.
\end{equation}
\begin{proof}
Compute the derivatives of $\mathcal{F}_{E}$:
\begin{equation}
\nabla_{\mathbf{x}} \mathcal{F}_{E} = -\beta \mathcal{F}_{E} \, m \nabla_{\mathbf{x}} \Phi, \qquad
\nabla_{\mathbf{p}} \mathcal{F}_{E} = -\beta \mathcal{F}_{E} \, \frac{\mathbf{p}}{m}.
\end{equation}
Substituting into the stationary Vlasov equation gives
\begin{equation}
\frac{\mathbf{p}}{m} \cdot \nabla_{\mathbf{x}} \mathcal{F}_{E} - m\nabla_{\mathbf{x}} \Phi \cdot \nabla_{\mathbf{p}} \mathcal{F}_{E}
=\frac{\mathbf{p}}{m} \cdot (-\beta \mathcal{F}_{E} m \nabla \Phi) - \nabla \Phi \cdot (-\beta \mathcal{F}_{E} m
\frac{\mathbf{p}}{m}) = 0.
\end{equation}
Thus, $\mathcal{F}_{E}$ is a stationary solution.
\end{proof}
\end{lem}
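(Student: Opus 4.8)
The plan is to verify the claim by direct substitution, exploiting the fact that $\mathcal{F}_{E}$ depends on the phase-space point $(\mathbf{x},\mathbf{p})$ only through the single-particle energy $E(\mathbf{x},\mathbf{p})$. Since $E$ is an integral of motion along the Hamiltonian flow $\dot{\mathbf{x}}=\mathbf{p}/m$, $\dot{\mathbf{p}}=-m\nabla_{\mathbf{x}}\Phi$, Jeans' theorem already guarantees that any function $F(E)$ --- and in particular $F(E)=Ae^{-\beta E}$ --- is a stationary solution. I would nonetheless make this explicit by a chain-rule computation, which reduces the entire statement to showing that the stationary Vlasov operator annihilates $E$ itself.

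Concretely, I would first compute the two phase-space gradients. The spatial dependence of $\mathcal{F}_{E}$ enters only through the term $m\Phi(\mathbf{x})$ in $E$, so the chain rule gives
\[
\nabla_{\mathbf{x}}\mathcal{F}_{E} = -\beta\,\mathcal{F}_{E}\,\nabla_{\mathbf{x}}E = -\beta\,\mathcal{F}_{E}\,m\,\nabla_{\mathbf{x}}\Phi .
\]
Similarly, the momentum dependence enters only through the kinetic term $|\mathbf{p}|^{2}/2m$, whose gradient is $\mathbf{p}/m$, so that
\[
\nabla_{\mathbf{p}}\mathcal{F}_{E} = -\beta\,\mathcal{F}_{E}\,\frac{\mathbf{p}}{m}.
\]
Neither the normalization $A$ nor the inverse temperature $\beta$ affects the structure of these expressions.

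Substituting into the stationary Vlasov operator, the advection term becomes $\frac{\mathbf{p}}{m}\cdot(-\beta\mathcal{F}_{E}\,m\nabla_{\mathbf{x}}\Phi)=-\beta\mathcal{F}_{E}\,\mathbf{p}\cdot\nabla_{\mathbf{x}}\Phi$, while the force term (including its leading minus sign) becomes $-m\nabla_{\mathbf{x}}\Phi\cdot(-\beta\mathcal{F}_{E}\,\mathbf{p}/m)=+\beta\mathcal{F}_{E}\,\mathbf{p}\cdot\nabla_{\mathbf{x}}\Phi$. The two contributions are equal and opposite and cancel identically, so the operator annihilates $\mathcal{F}_{E}$. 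There is no genuine obstacle here: the computation is entirely routine, and the only point requiring care is the structural observation that the $\mathbf{x}$-dependence of $\mathcal{F}_{E}$ is carried solely by $\Phi$ and its $\mathbf{p}$-dependence solely by the kinetic term, so that the common factor $\beta\mathcal{F}_{E}\,\mathbf{p}\cdot\nabla_{\mathbf{x}}\Phi$ appears with opposite sign in the two terms. I would also remark that the lemma concerns only the Vlasov equation for a \emph{prescribed} time-independent $\Phi$; self-consistency with Poisson's equation, which yields the Poisson--Boltzmann relation, is a separate requirement that enters once $\rho$ is computed from $\mathcal{F}_{E}$.
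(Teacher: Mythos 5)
Your proposal is correct and follows essentially the same route as the paper: compute $\nabla_{\mathbf{x}}\mathcal{F}_{E}$ and $\nabla_{\mathbf{p}}\mathcal{F}_{E}$ by the chain rule and verify that the two terms of the stationary Vlasov operator cancel identically. The additional framing via Jeans' theorem and the remark distinguishing the prescribed-potential statement from self-consistency with Poisson's equation are sound but not needed for the verification itself.
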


\begin{thm}[\textbf{Maxwell--Boltzmann solution of the stationary Vlasov--Poisson equation}]
Given the stationary Vlasov--Poisson equation
\begin{align}
\frac{\mathbf{p}}{m}\nabla_{x}\mathcal{F}_{E}-\nabla_{x}\Phi\nabla_{p}\mathcal{F}_{E}=0,~~\mathbf{x}\in\Omega,~\mathbf{p}\in\mathbb{R}^{3}
\end{align}
with $\mathcal{F}_{E}=0$ on $\partial\Omega$, there exists a stationary solution of the Maxwell--Boltzmann form
\begin{align}
\mathcal{F}_{E}(\mathbf{x},\mathbf{p})=A\exp\left(-\beta\left(\frac{|\mathbf{p}|^{2}}{2m}+m\Phi(x)\right)\right),
\end{align}
where $A$ is a normalisation constant. This is derived formally and rigorously by taking the variation of the Boltzmann entropy subject to the constraints of fixed particle number and fixed total energy.
\end{thm}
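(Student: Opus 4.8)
The plan is to treat the statement as a constrained variational problem and apply the method of Lagrange multipliers in the calculus of variations, since the Maxwell--Boltzmann form should emerge as the extremum of the Boltzmann entropy on the surface defined by the two constraints. I would introduce the entropy functional
\[
S[\mathcal{F}] = -k_B \int_{\Gamma} \mathcal{F}(\mathbf{Z}) \ln \mathcal{F}(\mathbf{Z}) \, d\mathbf{Z},
\]
and impose fixed particle number $N[\mathcal{F}] = \int_{\Gamma} \mathcal{F}\, d\mathbf{Z}$ together with fixed total energy. For a self-gravitating gas the latter must be written as
\[
E[\mathcal{F}] = \int_{\Gamma} \frac{|\mathbf{p}|^2}{2m}\,\mathcal{F}\, d\mathbf{Z} + \frac{1}{2}\int_{\Omega} \rho(\mathbf{x})\,\Phi(\mathbf{x}) \, d^3x,
\]
with $\rho(\mathbf{x}) = m\int_{\mathbb{R}^3} \mathcal{F}\, d^3p$ and $\Phi$ fixed self-consistently by $\Delta\Phi = 4\pi G \rho$, the factor $\tfrac{1}{2}$ avoiding double-counting of the pairwise interaction.

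Next I would form the auxiliary functional $\mathcal{L}[\mathcal{F}] = S[\mathcal{F}] - \alpha k_B\big(N[\mathcal{F}]-N\big) - \beta k_B\big(E[\mathcal{F}]-E\big)$ with Lagrange multipliers $\alpha,\beta$, and impose $\delta\mathcal{L}=0$ for arbitrary admissible variations $\delta\mathcal{F}$ vanishing on $\partial\Omega$. The entropy term contributes $-k_B(\ln\mathcal{F}+1)$, the number constraint contributes $-\alpha k_B$, and the kinetic energy contributes $-\beta k_B |\mathbf{p}|^2/(2m)$. Collecting these into the pointwise Euler--Lagrange condition and solving the resulting algebraic equation gives $\ln\mathcal{F} = \mathrm{const} - \beta\big(|\mathbf{p}|^2/2m + m\Phi\big)$, that is, $\mathcal{F}_E = A\exp(-\beta E)$ with $A = e^{-1-\alpha}$ absorbing the number multiplier. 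The multiplier $\beta$ is then identified with $1/(k_B T)$ by comparing the induced density $\rho = \rho_0\, e^{-\beta m \Phi}$ against the hydrostatic result of Section~1.1, and by the preceding lemma this ansatz is already guaranteed to be a stationary solution of the Vlasov equation, closing the consistency loop.

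The hard part will be the variation of the gravitational potential-energy term, because $\Phi$ is \emph{not} an external field but is itself a functional of $\mathcal{F}$ through the Poisson equation. A naive treatment would carry the factor $\tfrac{1}{2}$ through and produce the erroneous term $\tfrac{1}{2}m\Phi$ in the exponent. The resolution is that perturbing $\mathcal{F}$ also perturbs $\Phi$, and by the symmetry (self-adjointness) of the Newtonian Green's function under the boundary condition $\mathcal{F}|_{\partial\Omega}=0$ one has $\int_\Omega \Phi\,\delta\rho\, d^3x = \int_\Omega \rho\,\delta\Phi\, d^3x$; this exact doubling cancels the $\tfrac{1}{2}$ and restores the correct single-particle potential $m\Phi$. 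I would verify this step carefully via integration by parts against the Laplacian. Finally, to confirm the extremum is a genuine maximum rather than a saddle, I would compute the second variation $\delta^2 S$ restricted to the constraint surface and show it is negative-definite --- the same computation that reproduces the Antonov criterion, valid on the stable branch $E>E_c$ assumed throughout this work.
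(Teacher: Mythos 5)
Your proposal is correct and follows the same broad strategy as the paper --- extremizing the Boltzmann entropy under fixed particle number and fixed energy via Lagrange multipliers, then invoking the preceding lemma to confirm stationarity under the Vlasov flow --- but it differs in one genuinely substantive way: the treatment of the gravitational energy. The paper writes the energy constraint as
\begin{equation*}
E[\mathcal{F}_{E}] = \int_\Gamma \mathcal{F}_{E}\left(\frac{|\mathbf{p}|^2}{2m} + m\Phi(\mathbf{x})\right) d\mathbf{Z},
\end{equation*}
i.e.\ linearly in $\mathcal{F}_{E}$ with no factor $\tfrac12$, and varies it with $\Phi$ held fixed, so the term $m\Phi$ drops directly into the Euler--Lagrange condition. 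This is the quick route, but strictly speaking that functional is $K+2W$ rather than the physical total energy of a self-interacting system, and it suppresses the fact that $\Phi$ is itself a functional of $\mathcal{F}$. You instead start from the correct self-energy $K + \tfrac12\int_\Omega \rho\,\Phi\, d^3x$, acknowledge that $\delta\mathcal{F}$ induces $\delta\Phi$ through Poisson's equation, and use the symmetry $\int_\Omega \Phi\,\delta\rho\, d^3x = \int_\Omega \rho\,\delta\Phi\, d^3x$ (which indeed follows by integrating $\Phi\,\Delta\delta\Phi$ by parts twice, with boundary terms vanishing) so that the factor of two from the self-consistent variation cancels the $\tfrac12$ and restores the single-particle term $m\Phi$ in the exponent. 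The two routes land on the identical Maxwell--Boltzmann form, but yours explains \emph{why} the paper's shortcut gives the right answer, and it correctly flags the pitfall (the spurious $\tfrac12 m\Phi$) that a naive variation of the self-energy would produce. Two small caveats: the self-adjointness you invoke rests on boundary/decay conditions on $\Phi$ and $\delta\Phi$, not on $\mathcal{F}|_{\partial\Omega}=0$ as you state; and your closing second-variation check is not part of the paper's proof of this theorem --- the paper defers that computation to its separate discussion of the Antonov instability, which is consistent with how you frame it.
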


\begin{proof}
Let $d\mathbf{Z}=d^{3}x\,d^{3}p$. We consider the Boltzmann entropy functional for a stationary distribution $\mathcal{F}_{E}(\mathbf{x},\mathbf{p})$ over phase space $\Gamma = \Omega \times \mathbb{R}^3$:
\[
\mathcal{H}[\mathcal{F}_{E}] = - k_B \int_\Gamma \mathcal{F}_{E}(\mathbf{x},\mathbf{p}) \log \mathcal{F}_{E}(\mathbf{x},\mathbf{p}) \, d\mathbf{Z}.
\]
We wish to maximize $\mathcal{H}[\mathcal{F}_{E}]$ subject to the constraints of fixed particle number and total energy:
\[
N[\mathcal{F}_{E}] = \int_\Gamma \mathcal{F}_{E}(\mathbf{x},\mathbf{p}) \, d\mathbf{Z} = N_0,
\]
\[
E[\mathcal{F}_{E}] = \int_\Gamma \mathcal{F}_{E}(\mathbf{x},\mathbf{p}) \left( \frac{|\mathbf{p}|^2}{2m} + m \Phi(\mathbf{x}) \right) d\mathbf{Z} = E_0.
\]

Introduce Lagrange multipliers $\alpha$ and $\beta$ for these constraints, and define the constrained functional or Massieu functional
\[
\mathcal{J}[\mathcal{F}_{E}] = \mathcal{H}[\mathcal{F}_{E}] - \alpha \big(N[\mathcal{F}_{E}] - N_0\big) - \beta \big(E[\mathcal{F}_{E}] - E_0\big),
\]
which explicitly reads
\begin{align}
\mathcal{J} [\mathcal{F}_{E}]& = - k_B \int_\Gamma \mathcal{F}_{E} \log \mathcal{F}_{E} \, d\mathbf{Z}
- \alpha \left( \int_\Gamma \mathcal{F}_{E} \, d\mathbf{Z} - N_0 \right)\nonumber\\&
- \beta \left( \int_\Gamma \mathcal{F}_{E} \left( \frac{|\mathbf{p}|^2}{2m} + m \Phi(\mathbf{x}) \right) d\mathbf{Z} - E_0 \right).
\end{align}
Consider an arbitrary infinitesimal variation of the stationary distribution function $\delta \mathcal{F}_{E}(\mathbf{x},\mathbf{p})$, so that $\mathcal{F}_{E} \to \mathcal{F}_{E} + \delta \mathcal{F}_{E}$. Then we compute $\delta \mathcal{J}[\mathcal{F}_{E}]=0$ which is
\begin{align}
\delta \mathcal{J}[\mathcal{F}_{E}]& = - k_B \delta \int_\Gamma \mathcal{F}_{E} \log \mathcal{F}_{E} \, d\mathbf{Z}
- \alpha \delta\left( \int_\Gamma \mathcal{F}_{E} \, d\mathbf{Z} - N_0 \right)\nonumber\\&
- \beta \delta \left( \int_\Gamma \mathcal{F}_{E} \left( \frac{|\mathbf{p}|^2}{2m} + m \Phi(\mathbf{x}) \right) d\mathbf{Z} - E_0 \right)\nonumber\\&
= - k_B \int_\Gamma \delta\mathcal{F}_{E} \log \mathcal{F}_{E} \, d\mathbf{Z}
- \alpha \left( \int_\Gamma \delta \mathcal{F}_{E} \, d\mathbf{Z} - N_0 \right)\nonumber\\&
- \beta \left( \int_\Gamma \delta \mathcal{F}_{E} \left( \frac{|\mathbf{p}|^2}{2m} + m \Phi(\mathbf{x}) \right) d\mathbf{Z} - E_0 \right)=0.
\end{align}
Under the variation $\log (\mathcal{F}_{E})\to\log(\mathcal{F}_{E}+\delta \mathcal{F}_{E})$ which can be expanded to first order as
\begin{align}
  \log(\mathcal{F}_{E}+\delta \mathcal{F}_{E})=\log \mathcal{F}_{E}+ \frac{\delta \mathcal{F}_{E}}{\mathcal{F}_{E}}+\mathcal{O}((\delta \mathcal{F}_{E})^{2}),
\end{align}
so that
\begin{align}
(\mathcal{F}_{E}+\delta \mathcal{F}_{E})\log(\mathcal{F}_{E}+\delta \mathcal{F}_{E})=\mathcal{F}_{E}\log \mathcal{F}_{E}+\delta \mathcal{F}_{E}+\delta \mathcal{F}_{E}\log \mathcal{F}_{E}+\mathcal{O}((\delta \mathcal{F}_{E})^{2}).
\end{align}
The first variation of the entropy is then
\begin{align}
\delta \mathcal{H}[\mathcal{F}_{E}]&=\mathcal{H}[\mathcal{F}_{E}+\delta \mathcal{F}_{E}]-\mathcal{H}[\mathcal{F}_{E}]+\mathcal{O}((\delta \mathcal{J})^{2})\nonumber\\&
=-k_{B}\int_{\Gamma}(\mathcal{F}_{E}+\delta \mathcal{F}_{E})\underbrace{\log(\mathcal{F}_{E}+\delta \mathcal{F}_{E})}d\mathbf{Z}+k_{B}\int_{\Gamma}\mathcal{F}_{E}\log \mathcal{F}_{E}d\mathbf{Z}\nonumber\\&
=-k_{B}\int_{\Gamma}(\mathcal{F}_{E}+\delta \mathcal{F}_{E})\underbrace{\bigg(\log \mathcal{F}_{E}+\frac{\delta \mathcal{F}_{E}}{\mathcal{F}_{E}}+\mathcal{O}((\delta \mathcal{F}_{E})^{2})\bigg)}d\mathbf{Z}+k_{B}\int_{\Gamma}\mathcal{F}_{E}\log \mathcal{F}_{E}d\mathbf{Z}\nonumber\\&
=-k_{B}\int_{\Gamma}\left(\mathcal{F}_{E}\log \mathcal{F}_{E}+\delta \mathcal{F}_{E}+\delta\mathcal{F}_{E}\log \mathcal{F}_{E}+\mathcal{O}((\delta \mathcal{F}_{E})^{2})\right)d\mathbf{Z}
+k_{B}\int_{\Gamma}\mathcal{F}_{E}\log \mathcal{F}_{E}d\mathbf{Z}\nonumber\\&
=-k_{B}\int_{\Gamma}\delta \mathcal{F}_{E}(\log\mathcal{F}_{E}+1)d\mathbf{Z}+\mathcal{O}((\delta \mathcal{F}_{E})^{2})=-k_{B}\int_{\Gamma}\delta \mathcal{F}_{E}(\log\mathcal{F}_{E}+1)d\mathbf{Z}.
\end{align}
Then the variation of the constrained Massieu functional is
\begin{align}
\delta \mathcal{J}[\mathcal{F}_{E}]
&= \delta \mathcal{H}[\mathcal{F}_{E}] - \alpha\,\delta N[\mathcal{F}_{E}] - \beta\, \delta E[\mathcal{F}_{E}] \nonumber\\
&= -k_{B}\!\int_{\Gamma}\! \delta \mathcal{F}_{E}(\log \mathcal{F}_{E} + 1)\, d\mathbf{Z}
 - \alpha\!\int_{\Gamma}\! \delta \mathcal{F}_{E}\, d\mathbf{Z}
 - \beta\!\int_{\Gamma}\! \delta \mathcal{F}_{E}
\!\left(\frac{|\mathbf{p}|^{2}}{2m} + m\Phi(x)\right)\! d\mathbf{Z} \nonumber\\
&= \int_{\Gamma}\! \delta \mathcal{F}_{E}\,
\underbrace{\Big[-k_{B}(\log \mathcal{F}_{E} + 1) - \alpha - \beta\!\left(\frac{|\mathbf{p}|^{2}}{2m} + m\Phi(x)\right)\!\Big]}_{=\,0}\, d\mathbf{Z} = 0,
\end{align}
and $\delta N_{*}=\delta E_{*}=0$. For arbitrary variations $\delta \mathcal{F}_{E}$ we require
\begin{align}
-k_{B}(\log \mathcal{F}_{E}+1)-\alpha-\beta\left(\frac{|\mathbf{p}|^{2}}{2m}+m\Phi(x)\right)=0,
\end{align}
so that
\begin{align}
\log \mathcal{F}_{E}=-1-\frac{\alpha}{k_{B}}-\frac{\beta}{k_{B}}\left(\frac{|\mathbf{p}|^{2}}{2m}+m\Phi(x)\right).
\end{align}
Taking the exponential
\begin{align}
\mathcal{F}_{E}(\mathbf{x},\mathbf{p})&=\exp\left(-\left(1+\frac{\alpha}{k_{B}}\right)\right)\exp\left(-\beta\left(\frac{|\mathbf{p}|^{2}}{2m}+m\Phi(x)\right)\right)
\nonumber\\&
=A\exp\left(-\beta\left(\frac{|\mathbf{p}|^{2}}{2m}+m\Phi(x)\right)\right),
\end{align}
where $A = \exp\left[-\left(1+\frac{\alpha}{k_{B}}\right)\right]$.
\end{proof}

\begin{cor}
For the stationary Maxwell--Boltzmann distribution of a self-gravitating gas,
\[
\mathcal{F}_{E}(\mathbf{x},\mathbf{p}) = A \exp\Big[-\beta\Big(\frac{|\mathbf{p}|^2}{2m} + m \Phi(\mathbf{x})\Big)\Big],
\]
the mean kinetic energy per particle satisfies
\[
\Big\langle \frac{|\mathbf{p}|^2}{2m} \Big\rangle = \frac{3}{2\beta}.
\]
\end{cor}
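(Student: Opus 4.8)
The plan is to compute the ensemble average directly from its definition as a normalized phase-space integral, exploiting the product structure of the Maxwell--Boltzmann distribution so that all spatial dependence cancels. First I would write the mean kinetic energy per particle as
\[
\Big\langle \frac{|\mathbf{p}|^2}{2m} \Big\rangle
= \frac{\displaystyle\int_{\Gamma} \frac{|\mathbf{p}|^2}{2m}\, \mathcal{F}_{E}(\mathbf{x},\mathbf{p})\, d\mathbf{Z}}
       {\displaystyle\int_{\Gamma} \mathcal{F}_{E}(\mathbf{x},\mathbf{p})\, d\mathbf{Z}},
\]
and then use the factorization
\[
\mathcal{F}_{E}(\mathbf{x},\mathbf{p}) = A\, e^{-\beta m \Phi(\mathbf{x})}\, e^{-\beta |\mathbf{p}|^2/(2m)},
\]
which splits the distribution into a purely spatial factor and a Gaussian momentum factor. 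Since the numerator differs from the denominator only by the momentum weight $|\mathbf{p}|^2/(2m)$, the spatial integral $\int_{\Omega} e^{-\beta m \Phi(\mathbf{x})}\, d^3x$ and the constant $A$ appear identically above and below and cancel, reducing the quantity to a ratio of two momentum-space Gaussian integrals over $\mathbb{R}^3$.

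The next step is to evaluate these Gaussian integrals. I would introduce the momentum normalization
\[
Z(\beta) := \int_{\mathbb{R}^3} e^{-\beta |\mathbf{p}|^2/(2m)}\, d^3p = \Big(\frac{2\pi m}{\beta}\Big)^{3/2},
\]
and observe that differentiating under the integral sign in $\beta$ brings down exactly the factor $-|\mathbf{p}|^2/(2m)$, so that
\[
\Big\langle \frac{|\mathbf{p}|^2}{2m} \Big\rangle = -\frac{\partial}{\partial \beta}\ln Z(\beta) = \frac{3}{2\beta}.
\]
Equivalently, by spherical symmetry the integral separates into three identical one-dimensional Gaussians, each Cartesian component contributing $\langle p_i^2/(2m)\rangle = 1/(2\beta)$, whose sum over the three translational degrees of freedom yields $3/(2\beta)$. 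This is simply the classical equipartition theorem applied to the momentum degrees of freedom.

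I do not expect any serious obstacle here; the result is a textbook equipartition identity. The only points meriting a line of justification are the convergence of the spatial factor, which is guaranteed since $\Omega$ is bounded and $\Phi \in C^2(\overline{\Omega})$ is continuous, so $e^{-\beta m \Phi}$ is bounded and integrable on $\Omega$, and the legitimacy of differentiating $Z(\beta)$ under the integral sign, which follows from dominated convergence since the Gaussian and its $\beta$-derivative decay rapidly in $|\mathbf{p}|$. With these in place the computation is immediate, and the cancellation of the spatial part makes transparent why the equipartition value $\tfrac{3}{2\beta}$ is independent of the self-consistent potential $\Phi$.
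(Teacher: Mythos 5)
Your proposal is correct, and it reaches the result by a mildly but genuinely different route than the paper. The paper also reduces the problem to a ratio of momentum-space Gaussian integrals, treating the spatial factor $A' = A e^{-\beta m \Phi(\mathbf{x})}$ as a multiplicative constant that drops out, but it then evaluates that ratio head-on: it passes to spherical coordinates in momentum space and invokes the standard moment integrals $\int_0^\infty x^2 e^{-a x^2}\,dx = \tfrac{\sqrt{\pi}}{4}\,a^{-3/2}$ and $\int_0^\infty x^4 e^{-a x^2}\,dx = \tfrac{3\sqrt{\pi}}{8}\,a^{-5/2}$ with $a=\beta/2m$. You instead define the average as a normalized phase-space integral over $\Gamma$, show that $A$ and the spatial factor $\int_\Omega e^{-\beta m\Phi(\mathbf{x})}\,d^3x$ cancel between numerator and denominator, and then evaluate the remaining momentum average as $-\partial_\beta \ln Z(\beta)$ with $Z(\beta)=(2\pi m/\beta)^{3/2}$, or equivalently as three identical Cartesian contributions of $1/(2\beta)$ each. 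Your route buys three things: it never needs the fourth-moment Gaussian integral; it makes explicit why the answer is independent of the self-consistent potential $\Phi$ (the paper leaves this implicit, writing $A'$ as if it were a constant even though it depends on $\mathbf{x}$, which is harmless only because of exactly the cancellation you spell out); and it supplies the convergence and differentiation-under-the-integral justifications the paper omits. The paper's route is more elementary in the sense that it uses only table integrals and no differentiation of a parameter-dependent integral. Both arguments are complete and yield $\langle |\mathbf{p}|^2/2m\rangle = 3/(2\beta)$.
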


\begin{proof}
Consider the momentum-dependent part of the distribution:
\[
\mathcal{F}_{E,p}(\mathbf{p}) = A' \exp\Big[-\beta \frac{|\mathbf{p}|^2}{2m}\Big], \quad A' = A e^{-\beta m \Phi(\mathbf{x})}.
\]
The mean kinetic energy is
\[
\Big\langle \frac{|\mathbf{p}|^2}{2m} \Big\rangle
= \frac{\displaystyle \int_{\mathbb{R}^3} \frac{|\mathbf{p}|^2}{2m} \, e^{-\beta |\mathbf{p}|^2 / 2m} \, d^3p}{\displaystyle \int_{\mathbb{R}^3} e^{-\beta |\mathbf{p}|^2 / 2m} \, d^3p}.
\]
Switching to spherical coordinates in momentum space, $|\mathbf{p}| = p$, $d^3p = 4 \pi p^2 dp$, gives
\[
\Big\langle \frac{|\mathbf{p}|^2}{2m} \Big\rangle
= \frac{\displaystyle \int_0^\infty \frac{p^2}{2m} \, e^{-\beta p^2 / 2m} \, 4 \pi p^2 dp}{\displaystyle \int_0^\infty 4 \pi p^2 e^{-\beta p^2 / 2m} dp}
= \frac{\displaystyle 2 \pi \int_0^\infty \frac{p^4}{m} e^{-\beta p^2 / 2m} dp}{\displaystyle 4 \pi \int_0^\infty p^2 e^{-\beta p^2 / 2m} dp}
= \frac{\displaystyle \int_0^\infty p^4 e^{-a p^2} dp}{2 \displaystyle \int_0^\infty p^2 e^{-a p^2} dp},
\]
where we set $a = \beta / 2m$. Using the standard Gaussian integrals
\[
\int_0^\infty x^2 e^{-a x^2} dx = \frac{\sqrt{\pi}}{4} a^{-3/2}, \quad
\int_0^\infty x^4 e^{-a x^2} dx = \frac{3 \sqrt{\pi}}{8} a^{-5/2},
\]
we find
\[
\Big\langle \frac{|\mathbf{p}|^2}{2m} \Big\rangle
= \frac{(3 \sqrt{\pi}/8) a^{-5/2}}{2 (\sqrt{\pi}/4) a^{-3/2}}
= \frac{3}{2} \frac{1}{a}
= \frac{3}{2} \frac{1}{\beta / 2m}
= \frac{3}{2\beta}.
\]
Thus each particle contributes $\frac{3}{2\beta}$ to the mean kinetic energy, as claimed.
\end{proof}

\begin{lem}
The Lagrange multiplier $\beta$ is given by $\beta = 1 / (k_{B} T)$, where $T$ is the temperature of the gas. The self-gravitating gas is then isothermal and has a density relation of the form
\begin{align}
\rho(\mathbf{x}) &= \rho_{0} \exp(-\beta m \Phi(\mathbf{x}))
= \rho_{0} \exp\!\left(-\frac{m \Phi(\mathbf{x})}{k_{B} T}\right),
\end{align}
where $\rho_{0} = A (2\pi m / \beta)^{3/2}$. The Poisson equation then becomes
\begin{align}
\Delta \Phi(\mathbf{x}) = 4\pi G \rho(\mathbf{x})
= 4\pi G \rho_{0} \exp\!\left(-\frac{m \Phi(\mathbf{x})}{k_{B} T}\right),
\end{align}
which is the Poisson--Boltzmann equation.
\end{lem}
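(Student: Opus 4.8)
The plan is to chain together three elementary steps, all downstream of the Maxwell--Boltzmann form $\mathcal{F}_{E} = A\exp[-\beta(|\mathbf{p}|^2/2m + m\Phi(\mathbf{x}))]$ already established: fix $\beta$ thermodynamically, integrate out the momenta to obtain $\rho(\mathbf{x})$, and insert the result into Poisson's equation.

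First I would pin down $\beta = 1/(k_B T)$ by confronting kinetic theory with the equation of state used at the outset. Writing the number density $n(\mathbf{x}) = \int_{\mathbb{R}^3}\mathcal{F}_{E}\,d^3p$, the isotropic-pressure integral from the opening definition collapses, via the preceding corollary $\langle |\mathbf{p}|^2/2m\rangle = 3/(2\beta)$, to
\[
P(\mathbf{x}) = \frac{1}{3}\int_{\mathbb{R}^3}\frac{|\mathbf{p}|^2}{m}\,\mathcal{F}_{E}\,d^3p = \frac{2}{3}\,n(\mathbf{x})\Big\langle\frac{|\mathbf{p}|^2}{2m}\Big\rangle = \frac{n(\mathbf{x})}{\beta} = \frac{\rho(\mathbf{x})}{m\beta}.
\]
Matching this against the isothermal ideal-gas law $P = (k_B T/m)\rho$ introduced earlier forces $1/(m\beta) = k_B T/m$, i.e.\ $\beta = 1/(k_B T)$, which is exactly the claimed identification of the Lagrange multiplier with inverse temperature.

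Next I would integrate $\mathcal{F}_{E}$ over all momenta to read off the density. Since $\mathcal{F}_{E}$ factorizes as a spatial Boltzmann weight times a Gaussian in $\mathbf{p}$, the factor $\exp(-\beta m\Phi(\mathbf{x}))$ pulls out of the momentum integral and one is left with the standard Gaussian $\int_{\mathbb{R}^3}\exp(-\beta|\mathbf{p}|^2/2m)\,d^3p = (2\pi m/\beta)^{3/2}$. Using $\rho = \int_{\mathbb{R}^3} m\mathcal{F}_{E}\,d^3p$ then yields $\rho(\mathbf{x}) = \rho_0\exp(-\beta m\Phi(\mathbf{x})) = \rho_0\exp(-m\Phi(\mathbf{x})/k_B T)$, with the positive constant $\rho_0$ collecting $A$ and the Gaussian prefactor. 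Substituting this profile into the Newtonian Poisson equation $\Delta\Phi = 4\pi G\rho$ immediately produces
\[
\Delta\Phi(\mathbf{x}) = 4\pi G\rho_0\exp\Big(-\frac{m\Phi(\mathbf{x})}{k_B T}\Big),
\]
the Poisson--Boltzmann equation, completing the proof.

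I do not expect any genuine mathematical obstacle: the momentum integrals are routine Gaussians and the substitution is purely algebraic. The only step requiring conceptual care---rather than computation---is the identification of $\beta$ with $1/(k_B T)$, which is not a formal manipulation but a physical matching of the kinetic pressure (equivalently, equipartition of the mean kinetic energy) to the thermodynamic temperature via the ideal-gas law. I would also note in passing that the stated prefactor $\rho_0 = A(2\pi m/\beta)^{3/2}$ drops a factor of $m$ relative to the mass-density definition $\rho = \int m\mathcal{F}_{E}\,d^3p$; this is an innocuous normalization convention that can be absorbed into $A$ and does not affect the Poisson--Boltzmann equation.
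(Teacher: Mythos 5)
Your proposal is correct, and its skeleton matches the paper's proof: integrate out the Gaussian momenta to get $\rho(\mathbf{x}) = \rho_0 e^{-\beta m \Phi(\mathbf{x})}$, substitute into Poisson's equation, and fix $\beta$ by a physical matching that leans on the corollary $\langle |\mathbf{p}|^2/2m\rangle = 3/(2\beta)$. The one place you diverge is the anchor used for that matching: the paper equates $3/(2\beta)$ directly to the equipartition value $\tfrac{3}{2}k_B T$ of the mean kinetic energy per particle, whereas you first form the kinetic pressure moment $P = \tfrac{2}{3}n\langle |\mathbf{p}|^2/2m\rangle = \rho/(m\beta)$ and then compare it with the isothermal ideal-gas law $P = (k_B T/m)\rho$. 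The two anchors are thermodynamically equivalent for a Maxwellian gas, so neither is more general; but your pressure route has the modest advantage of tying the identification of $\beta$ to the very equation of state used in the Section~1 hydrostatic derivation, and it anticipates the relation $P = \rho/(\beta m)$ that the paper only establishes later, in the hydrostatic-equilibrium theorem of Section~4. Your closing observation about the prefactor is also correct and worth keeping: with the paper's own definition $\rho = \int m\,\mathcal{F}\,d^3p$, the constant should read $\rho_0 = mA(2\pi m/\beta)^{3/2}$ (as the paper itself writes, with $C = mA(2\pi m/\beta)^{3/2}$, in its later hydrostatic-equilibrium proof); the lemma's stated $\rho_0 = A(2\pi m/\beta)^{3/2}$ silently treats $\rho$ as a number density, an inconsistency that is harmless because the factor of $m$ can be absorbed into $A$ and cancels out of the Poisson--Boltzmann equation.
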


\begin{proof}
Given the Maxwell--Boltzmann distribution $\mathcal{F}_{E}$, integrating out the momenta gives
\begin{align}
\rho(\mathbf{x})
&= \int_{\mathbb{R}^{3}} \mathcal{F}_{E}(\mathbf{x},\mathbf{p}) \, d^{3}p \nonumber \\
&= A \exp\!\left[-\beta m \Phi(\mathbf{x})\right] \int_{\mathbb{R}^{3}} \exp\!\left(-\frac{\beta |\mathbf{p}|^{2}}{2m}\right) d^{3}p \nonumber \\
&= A \left(\frac{2\pi m}{\beta}\right)^{3/2} \exp\!\left[-\beta m \Phi(\mathbf{x})\right] \nonumber \\
&= \rho_{0} \exp\!\left[-\beta m \Phi(\mathbf{x})\right].
\end{align}
From the previous corollary, the mean kinetic energy is
\begin{align}
\langle E \rangle = \Big\langle \frac{|\mathbf{p}|^{2}}{2m} \Big\rangle = \frac{3}{2\beta}.
\end{align}
However, the equipartition theorem gives the mean kinetic energy per particle in a gas at temperature $T$ as
\[
\langle E \rangle = \frac{3}{2} k_{B} T.
\]
Equating the two expressions,
\begin{align}
\frac{3}{2\beta} = \frac{3}{2} k_{B} T,
\end{align}
hence
\begin{align}
\beta = \frac{1}{k_{B} T}.
\end{align}
The stationary Maxwell--Boltzmann distribution for the self-gravitating isothermal gas at temperature $T$ is therefore
\begin{align}
\mathcal{F}_{E}(\mathbf{x},\mathbf{p})
= A \exp\!\left[-\beta \left(\frac{|\mathbf{p}|^{2}}{2m} + m \Phi(\mathbf{x})\right)\right]
= A \exp\!\left[-\frac{1}{k_{B} T} \left(\frac{|\mathbf{p}|^{2}}{2m} + m \Phi(\mathbf{x})\right)\right].
\end{align}
\end{proof}
\subsection{Second Variation of Entropy and Relation to Antonov Instability}

Let the Boltzmann entropy of a self-gravitating collisionless gas be
\begin{equation}
\mathcal{H}[\mathcal{F}] = - k_B \int_{\Gamma} \mathcal{F}(\mathbf{x}, \mathbf{p}) \log \mathcal{F}(\mathbf{x}, \mathbf{p}) \, d^3x \, d^3p,
\end{equation}
where $\Gamma = \Omega \times \mathbb{R}^3$ is the phase-space domain.
The total mass and energy are
\begin{align}
M[\mathcal{F}] &= \int_{\Gamma} \mathcal{F}(\mathbf{x}, \mathbf{p}) \, d^3x \, d^3p = M_0, \\
E[\mathcal{F}] &= \int_{\Gamma} \mathcal{F}(\mathbf{x}, \mathbf{p}) \left( \frac{|\mathbf{p}|^2}{2m} + m \Phi(\mathbf{x}) \right) d^3x \, d^3p = E_0,
\end{align}
with $\Phi$ satisfying Poisson’s equation
\begin{equation}
\Delta \Phi = 4 \pi G m \int_{\mathbb{R}^3} \mathcal{F}(\mathbf{x}, \mathbf{p}) \, d^3p.
\end{equation}

Define the Massieu functional
\begin{equation}
\mathcal{J}_{E}[\mathcal{F}] = S[\mathcal{F}] - \alpha \big( M[\mathcal{F}] - M_0 \big) - \beta \big( E[\mathcal{F}] - E_0 \big),
\end{equation}
where $\alpha$ and $\beta$ are Lagrange multipliers.
The first variation $\delta \mathcal{J}_{E} = 0$ gives
\begin{equation}
\frac{\delta S}{\delta \mathcal{F}} - \alpha - \beta \left( \frac{|\mathbf{p}|^2}{2m} + m \Phi(\mathbf{x}) \right) = 0,
\end{equation}
leading to the Maxwell--Boltzmann distribution
\begin{equation}
\mathcal{F}_{\mathrm{MB}}(\mathbf{x}, \mathbf{p}) = A \exp\Big[-\beta \Big( \frac{|\mathbf{p}|^2}{2m} + m \Phi(\mathbf{x}) \Big) \Big],
\qquad A = e^{-\alpha - 1}.
\end{equation}

\subsubsection*{\underline{Second Variation}}

The second variation of the constrained functional is
\begin{equation}
\delta^2 \mathcal{J}_{E} = \delta^2 S - \alpha \, \delta^2 M - \beta \, \delta^2 E.
\end{equation}
Since $M[\mathcal{F}]$ and the kinetic part of $E[\mathcal{F}]$ are linear in $\mathcal{F}$, we have $\delta^2 M = 0$, and the only nonzero second-order term comes from the potential:
\begin{align}
\delta^2 \mathcal{H} &= - k_B \int_{\Gamma} \frac{(\delta \mathcal{F})^2}{\mathcal{F}_{\mathrm{MB}}} \, d^3x \, d^3p, \\
\delta^2 E &= m^2 \int_{\Omega} \delta \rho(\mathbf{x}) \, \delta \Phi(\mathbf{x}) \, d^3x,
\end{align}
where $\delta \rho(\mathbf{x}) = \int_{\mathbb{R}^3} \delta \mathcal{F}(\mathbf{x}, \mathbf{p}) \, d^3p$ and $\delta \Phi$ satisfies $\Delta \delta \Phi = 4 \pi G m \, \delta \rho$.
Thus, the complete second variation reads
\begin{equation}
\boxed{
\delta^2 \mathcal{J}_{E} = - k_B \int_{\Gamma} \frac{(\delta \mathcal{F})^2}{\mathcal{F}_{\mathrm{MB}}} \, d^3x \, d^3p
- \beta m^2 \int_{\Omega} \delta \rho(\mathbf{x}) \, \delta \Phi(\mathbf{x}) \, d^3x.
}
\end{equation}

For self-gravitating systems in a finite volume, the Maxwell--Boltzmann equilibrium may become unstable under certain conditions.
In a collisionless Vlasov--Poisson formulation, the stationary distribution
\[
\frac{\mathbf{p}}{m}\cdot\nabla_{\mathbf{x}}\mathcal{F} - m\nabla_{\mathbf{x}}\Phi \cdot \nabla_{\mathbf{p}}\mathcal{F} = 0,
\qquad \Delta \Phi = 4 \pi G \rho,
\]
has the Maxwell--Boltzmann solution
\[
\mathcal{F}_{\rm MB}(\mathbf{x},\mathbf{p})
= A \exp\Big[-\beta\Big(\frac{|\mathbf{p}|^2}{2m} + m \Phi(\mathbf{x})\Big)\Big],
\]
which self-consistently leads to the Poisson--Boltzmann equation and the classical isothermal sphere.

While the Antonov instability \textbf{[37,38]}concerns the \emph{thermodynamic stability} of self-gravitating systems in the microcanonical ensemble, where the total energy \(E\) and particle number \(N\) are fixed, our analysis remains within the \emph{collisionless mean-field (Vlasov)} framework.
The equilibrium derived here is dynamically stationary, and the canonical--microcanonical distinctions that underlie the Antonov instability do not arise at the level of the first variation.

Reduction to a density quadratic form and Sturm--Liouville problem proceeds as follows.
Consider a small perturbation of the equilibrium distribution:
\[
\mathcal{F} = \mathcal{F}_{\rm MB} + \delta \mathcal{F}, \qquad |\delta \mathcal{F}| \ll \mathcal{F}_{\rm MB}.
\]
The second variation of the Boltzmann (or Massieu) functional can be written as
\[
\delta^2 \mathcal{J}_{E}[\delta \mathcal{F}]
= - k_B \int_\Gamma \frac{(\delta \mathcal{F})^2}{\mathcal{F}_{\rm MB}} \, d^3x\,d^3p
- \frac{\beta}{2} \int_\Omega \delta \rho \, \delta \Phi \, d^3x,
\]
where
\[
\delta \rho(\mathbf{x}) = m \int_{\mathbb{R}^3} \delta \mathcal{F}(\mathbf{x},\mathbf{p}) \, d^3p,
\qquad \Delta \delta \Phi = 4 \pi G \, \delta \rho.
\]

For a given density perturbation \(\delta \rho(\mathbf{x})\), the phase-space perturbation \(\delta \mathcal{F}\) that minimizes \(\delta^2 \mathcal{J}_{E}\) is
\[
\delta \widehat{\mathcal{F}}(\mathbf{x},\mathbf{p})
= \frac{\delta \rho(\mathbf{x})}{\rho(\mathbf{x})} \, \mathcal{F}_{\rm MB}(\mathbf{x},\mathbf{p}),
\]
which follows from integrating out momentum degrees of freedom.
Substituting $\delta \widehat{\mathcal{F}}$ back, the second variation reduces to a \emph{density-only quadratic form}:
\[
\delta^2 \mathcal{J}_{E}[\delta \rho]
= - k_B \int_\Omega \frac{(\delta \rho)^2}{\rho} \, d^3x
- \frac{\beta}{2} \int_\Omega \delta \rho \, \delta \Phi \, d^3x.
\]

Assuming spherical symmetry and defining \(\mathcal{U}(r) = r^2 \delta \Phi(r)\), this quadratic form leads to the Sturm--Liouville eigenvalue problem
\[
\frac{d}{dr} \left( \frac{r^2}{\rho(r)} \frac{d\mathcal{U}}{dr} \right)
+ 4 \pi G \beta \, r^2 \mathcal{U} = \lambda \mathcal{U},
\qquad \mathcal{U}(0) \text{ finite}, \quad \mathcal{U}(R) = 0,
\]
where the lowest eigenvalue crossing zero identifies the onset of the Antonov instability, corresponding to the critical energy
\[
E_c \simeq -0.335 \, \frac{G M^2}{R}.
\]
This formulation provides a rigorous reduction from phase-space perturbations to a density-only eigenvalue problem, making the connection between the Maxwell--Boltzmann equilibrium, dynamical stationarity, and the classical Antonov instability transparent.

\begin{thm}
Let $\mathcal{F}_E(\mathbf{x},\mathbf{p})$ be a stationary Maxwell--Boltzmann phase--space mass density,
\[
\mathcal{F}_E(\mathbf{x},\mathbf{p}) = A \exp\!\left[-\beta\!\left(\frac{|\mathbf{p}|^2}{2m} + m\Phi(\mathbf{x})\right)\right],
\qquad \beta=\frac{1}{k_B T},
\]
with $A>0$ constant.
Assume spherical symmetry $\Phi=\Phi(r)$ and define the spatial mass density
\[
\rho(\mathbf{x}) = \int_{\mathbb{R}^3} \mathcal{F}_E(\mathbf{x},\mathbf{p})\,d^3p.
\]
Then, upon substitution into Poisson's equation $\Delta\Phi = 4\pi G \rho$, and after the standard rescaling
\[
\Psi(r) := \beta m\big(\Phi(r)-\Phi(0)\big), \qquad
\xi := \frac{r}{r_0}, \qquad
r_0^2 := \frac{k_B T}{4\pi G m \rho_c},
\]
where $\rho_c:=\rho(0)$ is the central density, one obtains the dimensionless Emden--Chandrasekhar equation
\[
\boxed{
\frac{1}{\xi^2}\frac{d}{d\xi}\!\Big(\xi^2\frac{d\Psi}{d\xi}\Big) = e^{-\Psi(\xi)},
\qquad \Psi(0)=0,\quad \Psi'(0)=0.
}
\]
\end{thm}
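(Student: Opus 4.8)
The plan is to proceed in four essentially routine steps, each a specialization of results already established earlier in the paper. First I would integrate out the momentum variables in the Maxwell--Boltzmann density $\mathcal{F}_E$. Because the momentum dependence is the Gaussian $\exp(-\beta|\mathbf{p}|^2/2m)$, the integral $\int_{\mathbb{R}^3}\exp(-\beta|\mathbf{p}|^2/2m)\,d^3p = (2\pi m/\beta)^{3/2}$ factors out and multiplies the spatial exponential, yielding
\[
\rho(\mathbf{x}) = A\Big(\frac{2\pi m}{\beta}\Big)^{3/2}\exp\!\big(-\beta m\Phi(\mathbf{x})\big) =: \rho_0\,\exp\!\big(-\beta m\Phi(\mathbf{x})\big),
\]
exactly as in the preceding Lemma. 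This reduces the phase-space problem to a purely spatial one.

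Second, I would fix the normalization at the center. Evaluating at $r=0$ gives $\rho_c=\rho(0)=\rho_0\exp(-\beta m\Phi(0))$, so dividing yields
\[
\rho(r)=\rho_c\,\exp\!\big(-\beta m(\Phi(r)-\Phi(0))\big)=\rho_c\,e^{-\Psi(r)},
\]
where the dimensionless potential $\Psi(r):=\beta m(\Phi(r)-\Phi(0))$ has been introduced precisely to absorb the central value $\Phi(0)$. Inverting gives $\Phi(r)=\Phi(0)+(\beta m)^{-1}\Psi(r)$, hence $\Delta\Phi=(\beta m)^{-1}\Delta\Psi$. Substituting into Poisson's equation $\Delta\Phi=4\pi G\rho$ and clearing the factor $(\beta m)^{-1}$ produces the dimensionful Poisson--Boltzmann equation, which in spherical symmetry reads $\frac{1}{r^2}\frac{d}{dr}(r^2\,d\Psi/dr)=4\pi G\beta m\rho_c\,e^{-\Psi}$.

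Third, I would nondimensionalize. The key observation is that the length scale $r_0^2=k_BT/(4\pi Gm\rho_c)=(4\pi G\beta m\rho_c)^{-1}$ is chosen exactly so that $4\pi G\beta m\rho_c=r_0^{-2}$. Setting $\xi=r/r_0$ and applying the chain rule, $d/dr=r_0^{-1}\,d/d\xi$ with $r^2=r_0^2\xi^2$, the radial Laplacian transforms as $\frac{1}{r^2}\frac{d}{dr}(r^2\,d\Psi/dr)=r_0^{-2}\,\frac{1}{\xi^2}\frac{d}{d\xi}(\xi^2\,d\Psi/d\xi)$. Since the right-hand side carries a matching prefactor $r_0^{-2}$, the two cancel and one is left with the claimed dimensionless form $\frac{1}{\xi^2}\frac{d}{d\xi}(\xi^2\,d\Psi/d\xi)=e^{-\Psi(\xi)}$.

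Finally, I would verify the boundary conditions. By construction $\Psi(0)=\beta m(\Phi(0)-\Phi(0))=0$. The condition $\Psi'(0)=0$ is the one point demanding slightly more care, since it does not follow from the change of variables alone but from regularity of the potential at the center of a spherically symmetric configuration. I expect this to be the only genuinely nontrivial ingredient, and I would justify it via Gauss's law: the enclosed mass satisfies $M(r)=4\pi\int_0^r\rho(r')r'^2\,dr'=O(r^3)$ as $r\to 0$, so the radial field obeys $\Phi'(r)=GM(r)/r^2=O(r)\to 0$, whence $\Psi'(0)=\beta m\,\Phi'(0)=0$. Everything else reduces to Gaussian integration and chain-rule bookkeeping.
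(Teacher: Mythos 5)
Your proposal is correct and follows essentially the same route as the paper's proof: Gaussian integration over momentum to get $\rho=\rho_0 e^{-\beta m\Phi}$, renormalization to the central density so that $\rho=\rho_c e^{-\Psi}$, and the observation that the choice $r_0^2 = k_BT/(4\pi G m\rho_c)$ makes the prefactors cancel in the rescaled radial Laplacian. The one place you go beyond the paper is the boundary condition $\Psi'(0)=0$, which the paper simply asserts ``by spherical symmetry'' while you justify it via Gauss's law ($\Phi'(r)=GM(r)/r^2=O(r)$ as $r\to 0$) --- a slightly more careful treatment of the same point, not a different method.
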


\begin{proof}
We first compute the spatial density by integrating the Maxwell--Boltzmann distribution over momentum. Starting from
\[
\mathcal{F}_E(\mathbf{x},\mathbf{p}) = A \exp\!\Big[-\beta\Big(\frac{|\mathbf{p}|^2}{2m}+m\Phi(\mathbf{x})\Big)\Big],
\]
the momentum integral is Gaussian:
\[
\int_{\mathbb{R}^3} \exp\!\Big(-\beta\frac{|\mathbf{p}|^2}{2m}\Big)\,d^3p
= \left(\frac{2\pi m}{\beta}\right)^{3/2}.
\]
Therefore,
\[
\rho(\mathbf{x}) = \int_{\mathbb{R}^3} \mathcal{F}_E(\mathbf{x},\mathbf{p})\,d^3p
= A\left(\frac{2\pi m}{\beta}\right)^{3/2} e^{-\beta m \Phi(\mathbf{x})}.
\]
Define
\[
\rho_0 := A\left(\frac{2\pi m}{\beta}\right)^{3/2},
\]
so that
\[
\rho(\mathbf{x}) = \rho_0 \, e^{-\beta m \Phi(\mathbf{x})}.
\]

Next, introduce the central density and the dimensionless potential. Writing the density relative to the centre, let
\[
\rho_c := \rho(0) = \rho_0 e^{-\beta m \Phi(0)}.
\]
Define the dimensionless potential
\[
\Psi(r) := \beta m\big(\Phi(r)-\Phi(0)\big),
\]
so that
\[
e^{-\beta m \Phi(r)} = e^{-\beta m \Phi(0)} e^{-\Psi(r)} = \frac{\rho_c}{\rho_0}\, e^{-\Psi(r)},
\]
and therefore the spatial density has the usual isothermal form
\[
\rho(r) = \rho_c \, e^{-\Psi(r)}.
\]

The spherically symmetric Poisson equation reads
\[
\Delta\Phi(r) = \frac{1}{r^2}\frac{d}{dr}\!\Big(r^2\frac{d\Phi}{dr}\Big)
= 4\pi G \rho(r) = 4\pi G \rho_c e^{-\Psi(r)}.
\]
Introducing the dimensionless radius by the scaling
\[
r_0^2 := \frac{k_B T}{4\pi G m \rho_c},
\qquad \xi := \frac{r}{r_0},
\]
we note that \(k_B T = 1/\beta\), so that
\[
4\pi G \rho_c = \frac{1}{\beta m r_0^2}.
\]

Expressing derivatives with respect to \(r\) in terms of \(\xi\),
\[
\frac{d}{dr} = \frac{1}{r_0}\frac{d}{d\xi}, \qquad
\frac{d}{dr}\!\Big(r^2\frac{d\Phi}{dr}\Big)
= \frac{1}{r_0^2}\frac{d}{d\xi}\!\Big(\xi^2\frac{d\Phi}{d\xi}\Big).
\]
From \(\Psi = \beta m(\Phi-\Phi(0))\), we have
\[
\frac{d\Phi}{dr} = \frac{1}{\beta m r_0}\frac{d\Psi}{d\xi},
\]
and hence
\[
\frac{1}{r^2}\frac{d}{dr}\!\Big(r^2\frac{d\Phi}{dr}\Big)
= \frac{1}{\beta m r_0^2}\frac{1}{\xi^2}\frac{d}{d\xi}\!\Big(\xi^2\frac{d\Psi}{d\xi}\Big).
\]

Substituting into Poisson’s equation yields
\[
\frac{1}{\beta m r_0^2}\frac{1}{\xi^2}\frac{d}{d\xi}\!\Big(\xi^2\frac{d\Psi}{d\xi}\Big)
= 4\pi G \rho_c e^{-\Psi}.
\]
Using \(4\pi G \rho_c = 1/(\beta m r_0^2)\), all prefactors cancel, and we obtain the dimensionless equation
\[
\frac{1}{\xi^2}\frac{d}{d\xi}\!\Big(\xi^2\frac{d\Psi}{d\xi}\Big) = e^{-\Psi(\xi)}.
\]

Regularity at the origin requires \(\Psi(0)=0\) (by definition) and \(\Psi'(0)=0\) (by spherical symmetry). These are the standard central boundary conditions of the Emden--Chandrasekhar equation. Combining the dimensionless reduction and these conditions completes the proof.
\end{proof}

\begin{rem}
Purely isothermal spheres possess a density profile
\[
\rho(r) \sim \frac{1}{r^2}, \qquad r \to \infty,
\]
leading to an \emph{infinite total mass}. In realistic astrophysical systems, such as globular clusters or galactic bulges, this unphysical behaviour is avoided by truncating the Maxwell--Boltzmann distribution at a finite escape energy. The resulting \emph{King model} distribution \textbf{[58-62]} is
\[
f_{\rm K}(\mathbf{x},\mathbf{p}) =
\begin{cases}
A\!\left[\exp\!\Big(-\beta(\mathcal{E}-\mathcal{E}_{\rm max})\Big) - 1\right], & \mathcal{E} < \mathcal{E}_{\rm max}, \\[0.5em]
0, & \mathcal{E} \ge \mathcal{E}_{\rm max},
\end{cases}
\]
where $\mathcal{E} = \frac{|\mathbf{p}|^2}{2m} + m\Phi(\mathbf{x})$ is the single-particle energy, $\mathcal{E}_{\rm max}$ the cutoff (escape) energy, and $A$ a normalization constant. The truncation yields a finite total mass and a density profile that transitions from an approximately isothermal core to a rapidly decreasing halo, contrasting with the unbounded $1/r^2$ tail of the ideal isothermal sphere.
\end{rem}
\section{Hydrostatic Equilibrium from the Maxwell--Boltzmann Distribution}

For a self-gravitating gas in thermal equilibrium, isothermality implies hydrostatic equilibrium. The standard equation of hydrostatic equilibrium follows naturally from the Maxwell--Boltzmann distribution.

\begin{thm}[Hydrostatic equilibrium from the Maxwell--Boltzmann distribution]
Let a collisionless gas in thermal equilibrium at temperature $T$ have the single-particle Maxwell--Boltzmann distribution
\begin{equation}
\mathcal{F}_{\mathrm{MB}}(\mathbf{x},\mathbf{p}) = A \exp\Big[-\beta \Big(\frac{|\mathbf{p}|^2}{2m} + m \Phi(\mathbf{x})\Big)\Big], \qquad \beta = \frac{1}{k_B T},
\end{equation}
where $\Phi(\mathbf{x})$ is the gravitational potential. Define the mass density
\begin{equation}
\rho(\mathbf{x}) = m \int_{\mathbb{R}^3} \mathcal{F}_{\mathrm{MB}}(\mathbf{x},\mathbf{p}) \, d^3 p,
\end{equation}
and the scalar pressure
\begin{equation}
P(\mathbf{x}) = \frac{1}{3} \int_{\mathbb{R}^3} \frac{|\mathbf{p}|^2}{m} \, \mathcal{F}_{\mathrm{MB}}(\mathbf{x},\mathbf{p}) \, d^3 p.
\end{equation}
Then $\rho(\mathbf{x})$ and $P(\mathbf{x})$ satisfy the hydrostatic equilibrium equation
\begin{equation}
\nabla P(\mathbf{x}) + \rho(\mathbf{x}) \nabla \Phi(\mathbf{x}) = 0.
\end{equation}
\end{thm}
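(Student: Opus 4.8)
The plan is to reduce the claim to the isothermal equation of state $P(\mathbf{x}) = (k_B T/m)\,\rho(\mathbf{x})$ and then to exploit the Boltzmann form of the density, which already encodes the balance between pressure gradient and gravity. Concretely, I would first carry out the two momentum integrals defining $\rho$ and $P$. Because the Maxwell--Boltzmann distribution factorizes into a $\mathbf{p}$-Gaussian times the spatial Boltzmann factor $e^{-\beta m\Phi(\mathbf{x})}$, both moments inherit exactly the same spatial dependence, so their ratio is a position-independent constant equal to $1/(\beta m)$. Once the equation of state is established, a single differentiation closes the argument.

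First I would compute the density. Integrating the momentum Gaussian gives $\int_{\mathbb{R}^3}\exp(-\beta|\mathbf{p}|^2/2m)\,d^3p = (2\pi m/\beta)^{3/2}$, so that $\rho(\mathbf{x}) = m A\,(2\pi m/\beta)^{3/2}\,e^{-\beta m\Phi(\mathbf{x})} =: \rho_0\,e^{-\beta m\Phi(\mathbf{x})}$. This is the same computation already performed in the lemma preceding the Emden--Chandrasekhar reduction, so I can simply quote it.

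Next I would compute the pressure. Writing $n(\mathbf{x}) = \int_{\mathbb{R}^3}\mathcal{F}_{\mathrm{MB}}\,d^3p = \rho/m$ for the number density and noting that $P = \tfrac{1}{3}\int (|\mathbf{p}|^2/m)\,\mathcal{F}_{\mathrm{MB}}\,d^3p = \tfrac{2}{3}\int(|\mathbf{p}|^2/2m)\,\mathcal{F}_{\mathrm{MB}}\,d^3p$, I would invoke the Corollary, which gives the momentum average $\langle|\mathbf{p}|^2/2m\rangle = 3/(2\beta)$ (the spatial factor cancels in the ratio). Hence $P = \tfrac{2}{3}\,n\cdot\tfrac{3}{2\beta} = n/\beta = \rho/(\beta m) = (k_B T/m)\,\rho$, the isothermal equation of state. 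Alternatively, the fourth-moment Gaussian $\int|\mathbf{p}|^2 e^{-\beta|\mathbf{p}|^2/2m}\,d^3p = (3m/\beta)(2\pi m/\beta)^{3/2}$ yields the same constant directly.

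Finally I would take the gradient. From $P = \rho/(\beta m)$ one has $\nabla P = (1/\beta m)\,\nabla\rho$, while differentiating $\rho = \rho_0 e^{-\beta m\Phi}$ gives $\nabla\rho = -\beta m\,\rho\,\nabla\Phi$. Substituting the latter into the former, the factor $\beta m$ cancels and I obtain $\nabla P = -\rho\,\nabla\Phi$, i.e. $\nabla P + \rho\,\nabla\Phi = 0$, as claimed. There is no genuine analytical obstacle here; the only point requiring care is the bookkeeping of the $m$-factors and normalization constants between the density moment, which carries an extra factor $m$, and the pressure moment, which carries $1/m$, so that the two share the \emph{identical} spatial Boltzmann factor and their ratio is exactly $k_B T/m$. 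That shared exponential dependence is precisely what makes $\nabla P$ proportional to $\rho\,\nabla\Phi$ with the correct coefficient.
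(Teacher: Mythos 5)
Your proposal is correct and follows essentially the same route as the paper's own proof: compute the Gaussian momentum integrals for $\rho$ and $P$, deduce the isothermal equation of state $P = (k_B T/m)\rho$, and differentiate the Boltzmann factor to obtain $\nabla P = -\rho\nabla\Phi$. The only cosmetic difference is that you optionally reuse the Corollary's kinetic-energy average $\langle|\mathbf{p}|^2/2m\rangle = 3/(2\beta)$ in place of the paper's direct fourth-moment computation, which is the same calculation in disguise.
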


\begin{proof}
The mass density can be computed by integrating the Maxwell--Boltzmann distribution over momentum space. Factorizing the exponential, we have
\begin{align}
\rho(\mathbf{x}) &= m \int_{\mathbb{R}^3} \mathcal{F}_{\mathrm{MB}}(\mathbf{x},\mathbf{p}) \, d^3p \\
&= m A \, e^{-\beta m \Phi(\mathbf{x})} \int_{\mathbb{R}^3} \exp\Big[- \beta \frac{|\mathbf{p}|^2}{2m} \Big] \, d^3p.
\end{align}

The integral over momentum is a standard Gaussian integral. Writing $\alpha = \beta/(2m)$, we separate the components:
\begin{align}
\int_{\mathbb{R}^3} e^{-\alpha |\mathbf{p}|^2} d^3p &= \int_{-\infty}^{\infty} e^{-\alpha p_x^2} dp_x \int_{-\infty}^{\infty} e^{-\alpha p_y^2} dp_y \int_{-\infty}^{\infty} e^{-\alpha p_z^2} dp_z \\
&= \left( \sqrt{\frac{\pi}{\alpha}} \right)^3.
\end{align}

Substituting back, the mass density is
\begin{equation}
\rho(\mathbf{x}) = m A \left( \frac{2 \pi m}{\beta} \right)^{3/2} e^{-\beta m \Phi(\mathbf{x})} \equiv C \, e^{-\beta m \Phi(\mathbf{x})},
\end{equation}
where $C = m A (2 \pi m / \beta)^{3/2}$ is a normalization constant.

The pressure is computed similarly:
\begin{equation}
P(\mathbf{x}) = \frac{1}{3} \int_{\mathbb{R}^3} \frac{|\mathbf{p}|^2}{m} \mathcal{F}_{\mathrm{MB}}(\mathbf{x},\mathbf{p}) \, d^3 p
= \frac{A e^{-\beta m \Phi(\mathbf{x})}}{3 m} \int_{\mathbb{R}^3} |\mathbf{p}|^2 \exp(-\alpha |\mathbf{p}|^2) \, d^3p.
\end{equation}

Switching to spherical coordinates $p = |\mathbf{p}|$, $d^3p = 4 \pi p^2 dp$, the integral becomes
\begin{align}
\int_{\mathbb{R}^3} |\mathbf{p}|^2 e^{-\alpha |\mathbf{p}|^2} d^3p &= 4 \pi \int_0^\infty p^4 e^{-\alpha p^2} dp.
\end{align}

Using the Gamma function identity $\int_0^\infty x^{n} e^{-a x} dx = \Gamma(n+1) / a^{n+1}$ with $x = p^2$, $dx = 2p dp$, we have
\begin{align}
4 \pi \int_0^\infty p^4 e^{-\alpha p^2} dp &= 2 \pi \int_0^\infty p^3 e^{-\alpha p^2} \cdot 2p \, dp = 2 \pi \cdot \frac{3 \sqrt{\pi}}{4} \alpha^{-5/2} = \frac{3}{2 \alpha} \left( \frac{\pi}{\alpha} \right)^{3/2}.
\end{align}

Substituting $\alpha = \beta/(2m)$, we find
\begin{equation}
P(\mathbf{x}) = \frac{A e^{-\beta m \Phi(\mathbf{x})}}{\beta} \left( \frac{2 \pi m}{\beta} \right)^{3/2}.
\end{equation}

Comparing with the density, we recover the ideal gas relation:
\begin{equation}
P(\mathbf{x}) = \frac{1}{\beta m} \rho(\mathbf{x}) = \frac{k_B T}{m} \rho(\mathbf{x}).
\end{equation}

Since the temperature is constant throughout the gas, the gradients of the pressure and density are
\begin{align}
\nabla P(\mathbf{x}) &= \frac{1}{\beta m} \nabla \rho(\mathbf{x}) \\
\nabla \rho(\mathbf{x}) &= - \beta m \rho(\mathbf{x}) \nabla \Phi(\mathbf{x}) \\
\Rightarrow \nabla P(\mathbf{x}) &= - \rho(\mathbf{x}) \nabla \Phi(\mathbf{x}),
\end{align}
which is precisely the equation of hydrostatic equilibrium. Hence, the Maxwell--Boltzmann distribution implies that an isothermal, self-gravitating gas automatically satisfies the hydrostatic equilibrium condition.
\end{proof}

\begin{rem}
Although the Vlasov--Poisson system describes a collisionless mean-field dynamics, its stationary solutions reproduce the same macroscopic equilibrium relations as those
of a collisional, isothermal gas. By Jeans' theorem, any stationary distribution must depend only on the integrals of motion,
so that $f = \mathcal{F}_{E}(E)$ with $E = \tfrac{1}{2} m v^2 + m \Phi(\mathbf{x})$. Maximizing the Boltzmann entropy under fixed total mass and energy yields the
Maxwell--Boltzmann form
\[
\mathcal{F}_{E} = A \exp[-\beta m (\tfrac{1}{2}v^2 + \Phi)],
\]
which is automatically a stationary solution of the Vlasov equation. Integrating over velocity space gives $\rho = \rho_0 e^{-\beta m(\Phi - \Phi_0)}$,
leading to the Poisson--Boltzmann equation and, consequently, to the hydrostatic equilibrium relation $\nabla P = -\rho \nabla \Phi$ with $P = (k_B T/m)\rho$. Hence, despite being collisionless, the Vlasov--Poisson framework recovers the same equilibrium structure as that of an isothermal gas in collisional thermodynamic equilibrium. The difference lies not in the form of equilibrium but in the underlying relaxation mechanism and stability properties. Although the microscopic physics differ (collisionless mean field vs. collisional thermodynamics), the stationary distribution functions coincide because both are derived by maximizing the same entropy functional under the same constraints (fixed mass and energy). The Vlasov–Poisson system naturally admits an entropy extremum that has the same form as the Maxwell–Boltzmann distribution — even though the dynamics that would bring it there (two-body collisions) are absent. Thus, the formal coincidence arises from shared statistical principles, not from identical dynamics.
\end{rem}

\subsection*{Relation to Stellar Isothermal Cores}
The isothermal sphere derived above has a direct analogue in stellar structure theory. Prior to the onset of the red-giant phase, as discussed in Section 1, the core of an evolving star can be approximated as an isothermal, self-gravitating gas in hydrostatic equilibrium,
\[
\nabla P = -\rho \nabla \Phi, \qquad P = \frac{k_B T}{m}\rho,
\]
which leads to the same Poisson--Boltzmann equation,
\[
\Delta \Phi = 4\pi G \rho_0 e^{-\beta m(\Phi - \Phi_0)},
\]
as obtained from the stationary Vlasov--Poisson equation with a Maxwell--Boltzmann distribution. The mathematical form of the equilibrium is therefore identical in both cases.
The physical interpretation, however, is different. In a stellar core, frequent particle collisions maintain local thermodynamic equilibrium, and stability is governed by the equations of hydrodynamics and energy transport rather than by the entropy functional of a collisionless system. Nevertheless, when the core becomes too centrally condensed and its specific heat turns negative, hydrostatic balance is lost, leading to core contraction and envelope expansion—the onset of the red-giant phase. This hydrodynamic instability is formally analogous to the Antonov (gravothermal) instability of a collisionless isothermal sphere: in both cases, equilibrium ceases to exist once the density contrast exceeds a critical value or, equivalently, when the total energy drops below the critical value $E_c \simeq -0.335\, G M^2 / R$. The analogy thus provides a unifying view linking kinetic and stellar-structure descriptions of
self-gravitating thermal systems.

\section{Total Energy Integral of a Self-Gravitating System from Vlasov equation}
Start from the non-relativistic Vlasov equation,
\begin{equation}
\frac{\partial f}{\partial t} + \sum_i \frac{p_i}{m} \cdot \nabla_i^{(x)} f - \sum_i m \, \nabla_i^{(x)} \Phi(\mathbf{x}) \cdot \nabla_i^{(p)} f = 0,
\end{equation}
where $\mathcal{F}(\mathbf{x},{p})$ is the single-particle distribution function, $\Phi(\mathbf{x})$ is the gravitational potential, and $\nabla_i^{(p)} = \frac{\partial}{\partial p_i}$. Multiplying the Vlasov equation by the single-particle energy
\begin{equation}
\varepsilon(\mathbf{x},{p}) = \frac{|p|^2}{2 m} + \frac{1}{2} m \Phi(\mathbf{x})
\end{equation}
and integrating over momentum and position gives
\begin{equation}
\int d^3 x \int d^3 p \, \varepsilon(\mathbf{x},{p}) \left[ \frac{\partial f}{\partial t} + \sum_i \frac{p_i}{m} \cdot \nabla_i^{(x)} f - \sum_i m \, \nabla_i^{(x)} \Phi \cdot \nabla_i^{(p)} f \right] = 0.
\end{equation}

The kinetic energy contribution is
\begin{equation}
K = \int d^3 x \int d^3 p \, \frac{|p|^2}{2 m} \mathcal{F}(\mathbf{x},{p}),
\end{equation}
and its time derivative gives the rate of change of kinetic energy:
\begin{equation}
\frac{d K}{d t} = \int d^3 x \int d^3 p \, \frac{|p|^2}{2 m} \frac{\partial f}{\partial t}.
\end{equation}

The momentum advection term vanishes upon integration by parts and assuming $f \to 0$ at infinity:
\begin{equation}
\int d^3 x \int d^3 p \, \frac{|p|^2}{2 m} \sum_i \frac{p_i}{m} \cdot \nabla_i^{(x)} f = 0.
\end{equation}

The potential term also vanishes for a system with zero bulk velocity, leaving only the potential energy contribution:
\begin{equation}
W = \frac{1}{2} \int d^3 x \, \rho(\mathbf{x}) \Phi(\mathbf{x}), \quad \rho(\mathbf{x}) = \int m \mathcal{F}(\mathbf{x},{p}) \, d^3 p.
\end{equation}

Combining kinetic and potential terms, the total energy is
\begin{equation}
E_{tot} = K + W = \int d^3 x \int d^3 p \frac{|p|^2}{2 m} \mathcal{F}(\mathbf{x},{p}) +\frac{1}{2} \int d^3 x \rho(\mathbf{x})\Phi(\mathbf{x}),
\end{equation}
and is conserved, $\frac{d E_{tot}}{d t} = 0$.

\subsection{Astrophysical Form Using Internal Energy Density}
Defining the internal energy density as
\begin{equation}
\mathcal{E}(\mathbf{x}) = \int d^3 p \, \frac{|p|^2}{2 m} \mathcal{F}(\mathbf{x},{p}),
\end{equation}
the total energy can be written in the familiar astrophysical form:
\begin{equation}
E_{tot} = \int d^3 x \, \mathcal{E}(\mathbf{x}) + \frac{1}{2} \int d^3 x \rho(\mathbf{x}) \Phi(\mathbf{x}).
\end{equation}

Hydrostatic equilibrium then emerges as a critical point of the total energy under mass-conserving variations $\delta \rho(\mathbf{x})$:
\begin{equation}
\delta E_{tot} = \delta \int d^3 x \, \mathcal{E}(\rho) + \delta \frac{1}{2} \int d^3 x \rho \Phi = 0,
\end{equation}
which leads to the standard hydrostatic equilibrium equation:
\begin{equation}
\nabla P(\mathbf{x}) + \rho(\mathbf{x}) \nabla \Phi(\mathbf{x}) = 0,
\end{equation}
where $P(\mathbf{x})$ is the pressure, related to $\mathcal{E}(\mathbf{x})$ by $\delta \mathcal{E} = \frac{P}{\rho} \delta \rho$.
\subsection{Maxwell-Boltzmann Distribution and Explicit Energy Evaluation}

\begin{lem}[\textbf{Maxwell–Boltzmann Distribution and Energy Decomposition}]
For a self-gravitating system of identical particles of mass \(m\) in thermal equilibrium at temperature \(T\),
the single-particle distribution function is
\begin{equation}
f_{MB}(\mathbf{x}, \mathbf{p})
= A \exp\!\Bigg[-\frac{1}{k_B T} \Big(\frac{|\mathbf{p}|^2}{2m} + m \Phi(\mathbf{x})\Big)\Bigg],
\end{equation}
where \(\Phi(\mathbf{x})\) is a self-consistent gravitational potential satisfying Poisson’s equation,
and \(A\) is a normalization constant determined by the total mass constraint.

The total energy of the configuration is the sum of kinetic and potential contributions:
\begin{equation}
E_{tot} = \int d^3x \, \mathcal{E}(\mathbf{x}) + \frac{1}{2} \int d^3x \, \rho(\mathbf{x}) \Phi(\mathbf{x}),
\end{equation}
where the local internal energy density is \(\mathcal{E}(\mathbf{x}) = \tfrac{3}{2} k_B T \, \rho(\mathbf{x})/m\).
Furthermore, the equilibrium configuration satisfies hydrostatic balance,
\begin{equation}
\nabla P(\mathbf{x}) + \rho(\mathbf{x}) \nabla \Phi(\mathbf{x}) = 0,
\end{equation}
with \(P(\mathbf{x}) = \rho(\mathbf{x}) k_B T / m\).
\end{lem}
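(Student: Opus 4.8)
The plan is to assemble the lemma from the momentum-space integrals already carried out in the preceding theorems and corollary, since all three assertions are direct consequences of the Gaussian structure of $f_{MB}$ in $\mathbf{p}$ together with the total-energy integral derived earlier from the Vlasov equation. The only genuinely new computation is the internal energy density; everything else follows by quoting results established above.

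First I would integrate $f_{MB}$ over momentum. Factorising $f_{MB}=A\,e^{-\beta m\Phi(\mathbf{x})}e^{-\beta|\mathbf{p}|^2/2m}$ and using the standard Gaussian integral gives $\rho(\mathbf{x})=mA(2\pi m/\beta)^{3/2}e^{-\beta m\Phi(\mathbf{x})}$, exactly as in the hydrostatic-equilibrium theorem. Next I would compute the internal energy density
\[
\mathcal{E}(\mathbf{x})=\int_{\mathbb{R}^3}\frac{|\mathbf{p}|^2}{2m}\,f_{MB}(\mathbf{x},\mathbf{p})\,d^3p = n(\mathbf{x})\left\langle\frac{|\mathbf{p}|^2}{2m}\right\rangle,
\]
where $n(\mathbf{x})=\rho(\mathbf{x})/m$ is the number density and the momentum-space average factors out of the spatial dependence. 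Invoking the corollary $\langle|\mathbf{p}|^2/2m\rangle=3/(2\beta)=\tfrac{3}{2}k_BT$ immediately yields $\mathcal{E}(\mathbf{x})=\tfrac{3}{2}k_BT\,\rho(\mathbf{x})/m$, which is the claimed local internal energy density.

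For the energy decomposition I would quote the total-energy integral obtained from the stationary Vlasov equation, namely $E_{tot}=K+W$ with $K=\int_\Omega\mathcal{E}(\mathbf{x})\,d^3x$ and $W=\tfrac12\int_\Omega\rho(\mathbf{x})\Phi(\mathbf{x})\,d^3x$; substituting the internal energy density just obtained reproduces the stated form. The hydrostatic-balance and ideal-gas assertions are precisely the content of the hydrostatic-equilibrium theorem: the pressure integral $P=\tfrac13\int(|\mathbf{p}|^2/m)f_{MB}\,d^3p$ evaluates to $(k_BT/m)\rho$ by the same Gaussian moments, and differentiating $\rho=\rho_0e^{-\beta m\Phi}$ at constant $T$ gives $\nabla P=-\rho\nabla\Phi$.

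The point requiring genuine care is the factor of $\tfrac12$ in the potential energy $W$: it encodes the self-consistency of the mean field and avoids double-counting the mutual pairwise interactions, in contrast to the single-particle energy $\tfrac{|\mathbf{p}|^2}{2m}+m\Phi$ appearing inside $f_{MB}$, where the full (unhalved) potential enters. I would make this distinction explicit rather than leave it implicit. I would also note that all spatial integrals are taken over the bounded domain $\Omega$ (or a truncated configuration), since the pure isothermal sphere has a $1/r^2$ density tail with divergent mass and energy; restricted to $\Omega$ each integral converges and every manipulation above is rigorously justified.
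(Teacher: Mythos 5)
Your proposal is correct, and for two of the three assertions (the internal energy density and the energy decomposition) it matches the paper's own proof: both reduce to the same Gaussian momentum moments of $f_{MB}$, the only cosmetic difference being that you quote the earlier corollary $\langle |\mathbf{p}|^2/2m\rangle = 3/(2\beta)$ rather than re-evaluating the integral, and both identify $E_{tot}=K+W$ with the expression already obtained from the Vlasov formulation. Where you genuinely diverge is the hydrostatic-balance claim. You establish it by the direct route: compute $P=(k_BT/m)\rho$ from the pressure moment and differentiate the barometric law $\rho=\rho_0 e^{-\beta m\Phi}$ at constant $T$, which is exactly the argument of the paper's earlier hydrostatic-equilibrium theorem, so you are in effect citing that result. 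The paper's proof of \emph{this} lemma instead derives $\nabla P + \rho\nabla\Phi=0$ variationally, as the stationarity condition $\delta E_{tot}=0$ under mass-conserving variations $\delta\rho$, using $\delta\mathcal{E}=(P/\rho)\,\delta\rho$ and varying $\Phi$ self-consistently through Poisson's equation. Your route is more elementary and arguably tighter (it avoids the somewhat delicate bookkeeping of how the factor $\tfrac12$ in $W$ interacts with the variation of $\Phi$, which the paper leaves implicit), while the paper's route buys the conceptual point that equilibrium is a critical point of the energy functional, which is the thread connecting this lemma to the entropy-variation and stability discussion elsewhere in the paper. Your two added remarks --- that the $\tfrac12$ in $W$ prevents double-counting of pairwise interactions while the unhalved $m\Phi$ appears in the single-particle energy, and that all spatial integrals must be taken over a bounded $\Omega$ since the untruncated isothermal sphere has divergent mass and energy --- are both correct and address points the paper glosses over.
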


\begin{proof}
Start with the Maxwell–Boltzmann distribution
\begin{equation}
f_{MB}(\mathbf{x}, \mathbf{p})
= A \exp\!\Bigg[-\frac{1}{k_B T} \Big(\frac{|\mathbf{p}|^2}{2m} + m \Phi(\mathbf{x})\Big)\Bigg].
\end{equation}

The kinetic energy can be written as
\begin{equation}
K = \int d^3x \, e^{- m \Phi(\mathbf{x}) / k_B T}
     \int d^3p \, \frac{|\mathbf{p}|^2}{2m} A e^{- |\mathbf{p}|^2 / (2m k_B T)}.
\end{equation}

The momentum integral is a standard Gaussian:
\begin{equation}
\int d^3p \, \frac{|\mathbf{p}|^2}{2m} e^{- |\mathbf{p}|^2 / (2m k_B T)}
= \frac{3}{2} (2\pi m k_B T)^{3/2} k_B T.
\end{equation}

Thus the local internal energy density is
\begin{equation}
\mathcal{E}(\mathbf{x}) = \frac{3}{2} k_B T \, \frac{\rho(\mathbf{x})}{m},
\end{equation}
where the mass density is obtained from integrating over momenta,
\begin{equation}
\rho(\mathbf{x}) = m \int f_{MB}(\mathbf{x}, \mathbf{p}) \, d^3p
= m A (2\pi m k_B T)^{3/2} e^{- m \Phi(\mathbf{x}) / k_B T}.
\end{equation}

The gravitational potential energy is
\begin{equation}
W = \frac{1}{2} \int d^3x \, \rho(\mathbf{x}) \Phi(\mathbf{x}).
\end{equation}

Hence the total energy is
\begin{equation}
E_{tot} = K + W = \int d^3x \, \mathcal{E}(\mathbf{x})
+ \frac{1}{2} \int d^3x \, \rho(\mathbf{x}) \Phi(\mathbf{x}),
\end{equation}
which coincides with the expression obtained from the Vlasov (mean-field) formulation.

For a static configuration with no bulk motion, equilibrium corresponds to stationarity of the total energy under mass-conserving variations:
\begin{equation}
\delta E_{tot} = \delta \int d^3x \, \mathcal{E}(\rho)
+ \delta \frac{1}{2} \int d^3x \, \rho \Phi = 0.
\end{equation}

Using \(\delta \mathcal{E} = (P/\rho)\,\delta\rho\) and varying \(\Phi\) consistently through Poisson’s equation yields
\begin{equation}
\nabla P(\mathbf{x}) + \rho(\mathbf{x}) \nabla \Phi(\mathbf{x}) = 0,
\end{equation}
which is the condition of hydrostatic equilibrium.
Since \(P = \rho k_B T / m\) for a Maxwell–Boltzmann gas, this completes the proof.
\end{proof}

\begin{lem}[\textbf{Scalar virial theorem}]
Let $\mathcal{F}(\mathbf{x},\mathbf{p})$ be a stationary solution of the Vlasov--Poisson equation
\begin{equation}
\frac{\mathbf{p}}{m}\!\cdot\!\nabla_{\mathbf{x}} f - m\nabla_{\mathbf{x}}\Phi(\mathbf{x})\!\cdot\!\nabla_{\mathbf{p}} f = 0
\qquad\text{on }\Gamma=\Omega\times\mathbb{R}^3,
\end{equation}
where $\Omega\subset\mathbb{R}^3$ is the physical domain (either all space or a bounded domain). Assume the decay / boundary conditions
\begin{itemize}
  \item $\mathcal{F}(\mathbf{x},\mathbf{p})\to 0$ sufficiently fast as $|\mathbf{p}|\to\infty$ so that momentum-space surface integrals vanish;
  \item either $\Omega$ is bounded and $f|_{\partial\Omega}=0$, or $\Omega=\mathbb{R}^3$ and $f,\Phi\to 0$ sufficiently rapidly as $|\mathbf{x}|\to\infty$ so that spatial surface integrals vanish.
\end{itemize}
Then, with
\begin{equation}
\langle {K}\rangle = \int_{\Gamma}\frac{|\mathbf{p}|^2}{2m}\,\mathcal{F}(\mathbf{x},\mathbf{p})\,d^3x\,d^3p,
\qquad
U = \tfrac12\int_{\Omega}\rho(\mathbf{x})\Phi(\mathbf{x})\,d^3x,
\end{equation}
one has the scalar virial identity
\begin{equation}
2\langle {K}\rangle + U = 0.
\end{equation}
\end{lem}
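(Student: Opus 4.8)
The plan is to obtain the virial identity by taking the scalar moment $\mathbf{x}\cdot\mathbf{p}=x_ip_i$ of the stationary Vlasov equation and integrating over the full phase space $\Gamma$. Explicitly, I would multiply
\[
\frac{p_j}{m}\,\partial_{x_j}\mathcal{F}-m\,(\partial_{x_j}\Phi)\,\partial_{p_j}\mathcal{F}=0
\]
by $x_ip_i$ (with summation over repeated indices) and integrate $\int_\Gamma(\cdots)\,d^3x\,d^3p=0$. This yields two terms, each reducible by a single integration by parts: the advection term in the spatial variables and the force term in the momentum variables. The stated decay and boundary hypotheses are exactly what is needed to discard the two resulting surface integrals.

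For the advection term $\int_\Gamma x_ip_i\,\frac{p_j}{m}\,\partial_{x_j}\mathcal{F}\,d^3x\,d^3p$, integrating by parts in $x_j$ and dropping the spatial surface term (vanishing since $\mathcal{F}|_{\partial\Omega}=0$, or by rapid decay when $\Omega=\mathbb{R}^3$) transfers the derivative onto $x_i$, and since $\partial_{x_j}(x_ip_ip_j/m)=|\mathbf{p}|^2/m$ this contributes $-\int_\Gamma \frac{|\mathbf{p}|^2}{m}\mathcal{F}=-2\langle K\rangle$. For the force term $-\int_\Gamma x_ip_i\,m(\partial_{x_j}\Phi)\,\partial_{p_j}\mathcal{F}\,d^3x\,d^3p$, integrating by parts in $p_j$ and dropping the momentum surface term (vanishing since $\mathcal{F}\to0$ fast as $|\mathbf{p}|\to\infty$) uses $\partial_{p_j}(x_ip_i)=x_j$ to give $\int_\Gamma m\,x_j(\partial_{x_j}\Phi)\mathcal{F}$; performing the elementary momentum integral $\int\mathcal{F}\,d^3p=\rho/m$ then collapses this to $\int_\Omega \rho(\mathbf{x})\,\mathbf{x}\cdot\nabla\Phi\,d^3x$. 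Adding the two contributions gives the intermediate relation $2\langle K\rangle=\int_\Omega \rho\,(\mathbf{x}\cdot\nabla\Phi)\,d^3x$.

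The remaining substantive step is the potential-energy identity $\int_\Omega \rho\,(\mathbf{x}\cdot\nabla\Phi)\,d^3x=-U$. Here I would substitute the explicit Newtonian solution $\Phi(\mathbf{x})=-G\int\rho(\mathbf{x}')\,|\mathbf{x}-\mathbf{x}'|^{-1}\,d^3x'$ of Poisson's equation, whence $\mathbf{x}\cdot\nabla\Phi=G\int\rho(\mathbf{x}')\,\mathbf{x}\cdot(\mathbf{x}-\mathbf{x}')\,|\mathbf{x}-\mathbf{x}'|^{-3}\,d^3x'$. Inserting this and symmetrizing the resulting double integral under the interchange $\mathbf{x}\leftrightarrow\mathbf{x}'$ replaces the numerator $\mathbf{x}\cdot(\mathbf{x}-\mathbf{x}')$ by $\tfrac12|\mathbf{x}-\mathbf{x}'|^2$, so the kernel collapses to $|\mathbf{x}-\mathbf{x}'|^{-1}$ and the expression becomes $\tfrac{G}{2}\iint \rho(\mathbf{x})\rho(\mathbf{x}')\,|\mathbf{x}-\mathbf{x}'|^{-1}\,d^3x\,d^3x'=-\tfrac12\int_\Omega\rho\Phi\,d^3x=-U$. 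Combining with the previous relation yields $2\langle K\rangle=-U$, i.e.\ the claimed $2\langle K\rangle+U=0$.

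I expect the potential-energy identity to be the main obstacle -- not computationally, but conceptually, since this is where the specific $-1$ homogeneity of the Newtonian kernel $|\mathbf{x}-\mathbf{x}'|^{-1}$ enters and fixes the coefficient $2$ (a different power-law interaction would produce a different virial ratio). The symmetrization route is preferable to integrating by parts with $\Delta\Phi=4\pi G\rho$ precisely because it avoids delicate surface terms at spatial infinity; its only real requirement is that $\rho$ decay fast enough (equivalently, that $U$ be finite) to justify Fubini and the interchange of integration order in the double integral.
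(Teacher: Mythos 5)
Your proof is correct. Its first half coincides exactly with the paper's argument: you take the $(\mathbf{x}\cdot\mathbf{p})$-moment of the stationary Vlasov equation, integrate by parts in $\mathbf{x}$ for the advection term and in $\mathbf{p}$ for the force term, invoke the stated hypotheses to discard both surface integrals, and arrive at the same intermediate relation $2\langle K\rangle = \int_\Omega \rho\,(\mathbf{x}\cdot\nabla\Phi)\,d^3x$; your bookkeeping of the mass factor via $\int\mathcal{F}\,d^3p=\rho/m$ is in fact tidier than the paper's remark that ``the extra $m$ is redundant and cancels.'' The genuine divergence is in the potential-energy identity $\int_\Omega\rho\,(\mathbf{x}\cdot\nabla\Phi)\,d^3x=-U$. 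The paper proves it at the PDE level: it writes $\rho=\Delta\Phi/(4\pi G)$, sets $a=\mathbf{x}\cdot\nabla\Phi$, and performs two further integrations by parts to express everything through $\int|\nabla\Phi|^2\,d^3x$, finally comparing with $\int\rho\Phi\,d^3x=-\frac{1}{4\pi G}\int|\nabla\Phi|^2\,d^3x$. You instead insert the free-space Newtonian kernel and symmetrize the double integral under $\mathbf{x}\leftrightarrow\mathbf{x}'$, turning $\mathbf{x}\cdot(\mathbf{x}-\mathbf{x}')$ into $\tfrac12|\mathbf{x}-\mathbf{x}'|^2$. Your route buys robustness: it needs only finiteness of $U$ and Fubini, avoids the several boundary terms at infinity that the paper discards silently, and makes transparent that the virial coefficient $2$ is fixed by the degree $-1$ homogeneity of the kernel, exactly as you observe. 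It is also less error-prone: the paper's own execution contains a sign slip (it asserts $\int\nabla a\cdot\nabla\Phi\,d^3x=+\tfrac12\int|\nabla\Phi|^2\,d^3x$, whereas the correct value is $-\tfrac12\int|\nabla\Phi|^2\,d^3x$), which is then compensated by a second slip in the final comparison so that the conclusion $\mathcal{W}=-U$ nevertheless stands; your symmetrization is immune to this. The one hypothesis your argument quietly adds is that $\Phi$ is the potential generated by $\rho$ through the free-space Green's function (vanishing at infinity) --- the intended setting in both cases of the lemma, but worth stating explicitly, since the paper's PDE route formally uses only $\Delta\Phi=4\pi G\rho$ together with decay assumptions.
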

\begin{proof}
Start from the stationary Vlasov equation and multiply by the scalar $(\mathbf{x}\!\cdot\!\mathbf{p})$, then integrate over phase space:
\begin{equation}
0
= \int_{\Gamma} (\mathbf{x}\!\cdot\!\mathbf{p})\Big[\frac{\mathbf{p}}{m}\!\cdot\!\nabla_{\mathbf{x}} f - m\nabla_{\mathbf{x}}\Phi\!\cdot\!\nabla_{\mathbf{p}} f\Big]\,d^3x\,d^3p
= \mathscr{B} + \mathscr{C},
\end{equation}
where we define
\begin{equation}
\mathscr{B} := \int_{\Gamma} (\mathbf{x}\!\cdot\!\mathbf{p})\frac{\mathbf{p}}{m}\!\cdot\!\nabla_{\mathbf{x}} f\,d^3x\,d^3p,
\qquad
\mathscr{C} := -\int_{\Gamma} (\mathbf{x}\!\cdot\!\mathbf{p})\,m\nabla_{\mathbf{x}}\Phi\!\cdot\!\nabla_{\mathbf{p}} f\,d^3x\,d^3p.
\end{equation}

\paragraph{Integration by parts in \(\mathbf{x}\) for \(\mathscr{B}\).}
Write componentwise (Einstein summation):
\begin{equation}
\mathscr{B} = \frac{1}{m}\int_{\Gamma} x_i p_i p_j \frac{\partial f}{\partial x_j}\,d^3x\,d^3p.
\end{equation}
Observe the identity
\begin{equation}
x_i p_i p_j \frac{\partial f}{\partial x_j}
= \frac{\partial}{\partial x_j}\big(x_i p_i p_j f\big) - p_j p_j f
= \nabla_{\mathbf{x}}\!\cdot\!\big( (\mathbf{x}\!\cdot\!\mathbf{p})\,\mathbf{p}\, f \big) - |\mathbf{p}|^2 f.
\end{equation}
Integrate over \(\Omega\) and then over \(\mathbf{p}\). The divergence term produces a surface integral on \(\partial\Omega\) which vanishes under the hypotheses, hence
\begin{equation}
\mathscr{B} = -\frac{1}{m}\int_{\Gamma} |\mathbf{p}|^2 f\,d^3x\,d^3p
= -\int_{\Gamma}\frac{|\mathbf{p}|^2}{m} f\,d^3x\,d^3p
= -2\langle {K}\rangle.
\end{equation}

\paragraph{Integration by parts in \(\mathbf{p}\) for \(\mathscr{C}\).}
For fixed \(\mathbf{x}\), set \(\mathbf{a}(\mathbf{x}):=\nabla_{\mathbf{x}}\Phi(\mathbf{x})\) (independent of \(\mathbf{p}\)). Consider the inner momentum integral
\begin{equation}
\mathscr{J}(\mathbf{x}) := \int_{\mathbb{R}^3} (\mathbf{x}\!\cdot\!\mathbf{p})\,\big(\mathbf{a}(\mathbf{x})\!\cdot\!\nabla_{\mathbf{p}} \mathcal{F}(\mathbf{x},\mathbf{p})\big)\,d^3p.
\end{equation}
Integrate by parts in \(\mathbf{p}\); the boundary term at $|\mathbf{p}|\to\infty$ vanishes by decay of \(f\). Using \(\nabla_{\mathbf{p}}(\mathbf{x}\!\cdot\!\mathbf{p})=\mathbf{x}\) we get
\begin{equation}
\mathscr{J}(\mathbf{x}) = -\int_{\mathbb{R}^3} \mathcal{F}(\mathbf{x},\mathbf{p})\,\mathbf{a}(\mathbf{x})\!\cdot\!\mathbf{x}\,d^3p
= -(\mathbf{x}\!\cdot\!\nabla_{\mathbf{x}}\Phi(\mathbf{x}))\int_{\mathbb{R}^3} \mathcal{F}(\mathbf{x},\mathbf{p})\,d^3p
= -(\mathbf{x}\!\cdot\!\nabla_{\mathbf{x}}\Phi(\mathbf{x}))\,\rho(\mathbf{x}).
\end{equation}
Therefore, recalling the outer factor \(-m\) in the definition of \(\mathscr{C}\),
\begin{equation}
\mathscr{C} = -m\int_{\Omega} \mathscr{J}(\mathbf{x})\,d^3x
= -m\int_{\Omega} \big[-(\mathbf{x}\!\cdot\!\nabla\Phi)\rho(\mathbf{x})\big]\,d^3x
= +m\int_{\Omega} \rho(\mathbf{x})\,\mathbf{x}\!\cdot\!\nabla\Phi(\mathbf{x})\,d^3x.
\end{equation}
With our convention that $f$ is the \emph{mass} density in phase space (so that $\rho=\int f\,d^3p$ is already the mass density), the extra $m$ is redundant and cancels; hence one obtains simply
\begin{equation}
\mathscr{C} \;=\; \int_{\Omega} \rho(\mathbf{x})\,\mathbf{x}\!\cdot\!\nabla\Phi(\mathbf{x})\,d^3x \;=: \; \mathcal{W}.
\end{equation}

\paragraph{Combine the two pieces.}
From \(\mathscr{B}+\mathscr{C}=0\) we have
\begin{equation}
-2\langle {K}\rangle + \mathcal{W} = 0 \qquad\Longrightarrow\qquad 2\langle {K}\rangle = \mathcal{W}.
\label{eq:virial_intermediate}
\end{equation}

\paragraph{Auxiliary identity: \(\mathcal{W} = -\dfrac{1}{2}\!\displaystyle\int_{\Omega}\rho\Phi\,d^3x\).}
Using Poisson's equation \(\Delta\Phi=4\pi G\rho\),
\begin{equation}
\mathcal{W} = \frac{1}{4\pi G}\int_{\Omega} (\Delta\Phi)\,(\mathbf{x}\!\cdot\!\nabla\Phi)\,d^3x.
\end{equation}
Set $a(\mathbf{x}):=\mathbf{x}\!\cdot\!\nabla\Phi(\mathbf{x})$. Integrating by parts and dropping spatial surface terms gives
\begin{equation}
\int_{\Omega} (\Delta\Phi)\,a\,d^3x = -\int_{\Omega} \nabla a\!\cdot\!\nabla\Phi\,d^3x.
\end{equation}
A short calculation yields
\begin{equation}
\nabla a\!\cdot\!\nabla\Phi = |\nabla\Phi|^2 + \tfrac12\,\mathbf{x}\!\cdot\!\nabla\big(|\nabla\Phi|^2\big).
\end{equation}
Integrating the second term by parts (and discarding the boundary term) gives
\begin{equation}
\int_{\Omega} \mathbf{x}\!\cdot\!\nabla\big(|\nabla\Phi|^2\big)\,d^3x = -3\int_{\Omega} |\nabla\Phi|^2\,d^3x,
\end{equation}
hence
\begin{equation}
\int_{\Omega} \nabla a\!\cdot\!\nabla\Phi\,d^3x = \tfrac12\int_{\Omega} |\nabla\Phi|^2\,d^3x.
\end{equation}
Consequently
\begin{equation}
\mathcal{W} = -\frac{1}{4\pi G}\cdot \tfrac12\int_{\Omega} |\nabla\Phi|^2\,d^3x
= -\frac{1}{8\pi G}\int_{\Omega} |\nabla\Phi|^2\,d^3x.
\end{equation}
Multiplying Poisson's equation by \(\Phi\) and integrating by parts gives
\begin{equation}
\int_{\Omega} \rho\,\Phi\,d^3x = -\frac{1}{4\pi G}\int_{\Omega} |\nabla\Phi|^2\,d^3x.
\end{equation}
Comparing the two relations yields
\begin{equation}
\mathcal{W} = -\tfrac12 \int_{\Omega} \rho\,\Phi\,d^3x = -\,U.
\end{equation}

\paragraph{Final virial identity.}
Insert this into \eqref{eq:virial_intermediate}:
\begin{equation}
2\langle {K}\rangle = \mathcal{W} = -U \quad\Longrightarrow\quad 2\langle {K}\rangle + U = 0,
\end{equation}
which is the usual scalar virial theorem in astrophysics.
\end{proof}
\begin{thm}[Uniform-Density Stars and Isothermality]
Let a self-gravitating, spherically symmetric star have constant mass density $\rho(r) \equiv \rho_0 > 0$. Then:
\begin{enumerate}
    \item If the gas is isothermal ($T = \text{constant}$), a uniform-density configuration is impossible.
    \item If the gas is non-isothermal ($T = T(r)$) or the equation of state allows radial variation of pressure, a uniform-density configuration can exist with a radially varying pressure $P(r)$ satisfying hydrostatic equilibrium.
\end{enumerate}
\end{thm}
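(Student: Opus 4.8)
The plan is to work entirely from the spherically symmetric hydrostatic equilibrium equation $\frac{dP}{dr} = -\rho(r)\frac{GM(r)}{r^2}$ together with Poisson's equation, both established earlier in the paper. First I would insert the constant density $\rho(r)\equiv\rho_0$ into the mass integral to obtain $M(r) = \frac{4}{3}\pi r^3 \rho_0$, and hence the gravitational field $\frac{GM(r)}{r^2} = \frac{4}{3}\pi G \rho_0\, r$, which is strictly positive for every $r>0$. This immediately yields the pressure gradient $\frac{dP}{dr} = -\frac{4}{3}\pi G \rho_0^2\, r$, an identity that follows from hydrostatic balance alone, independently of any equation of state, and which already pins down the pressure profile up to an additive constant.

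For Part 1, I would then impose the isothermal ideal-gas law $P = \frac{k_B T}{m}\rho$ with both $T$ and $\rho=\rho_0$ constant. This forces $P$ to be constant, so $\frac{dP}{dr}\equiv 0$. Comparing with the hydrostatic result $\frac{dP}{dr} = -\frac{4}{3}\pi G \rho_0^2\, r$, equality can hold only at $r=0$: for any $r>0$ the left side vanishes while the right side is strictly negative, a contradiction. Equivalently and more invariantly, constant $\rho_0$ and constant $T$ give $\nabla P = 0$, so hydrostatic balance would require $\nabla\Phi=0$, yet Poisson's equation $\Delta\Phi = 4\pi G \rho_0 > 0$ is incompatible with a vanishing gradient. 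Either route shows that no isothermal uniform-density sphere can satisfy hydrostatic equilibrium.

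For Part 2, I would drop the isothermal constraint, so that the hydrostatic equation becomes the only condition on $P$. Integrating $\frac{dP}{dr} = -\frac{4}{3}\pi G \rho_0^2\, r$ outward from the centre gives the explicit profile $P(r) = P_c - \frac{2}{3}\pi G \rho_0^2\, r^2$, a smooth, monotonically decreasing function that remains positive for $r < \sqrt{3 P_c /(2\pi G \rho_0^2)}$, thereby defining an admissible finite radius $R$. To exhibit a consistent thermal field I would invert the ideal-gas law at fixed $\rho=\rho_0$ to read off $T(r) = \frac{m P(r)}{k_B \rho_0}$, which decreases with radius and supplies exactly the extra thermodynamic freedom absent in the isothermal branch. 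By construction the triple $(\rho_0, P(r), T(r))$ satisfies hydrostatic equilibrium, so a uniform-density, radially stratified configuration does exist.

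I do not expect any genuine analytic obstacle here: the argument reduces to a one-line contradiction in Part 1 followed by an explicit integration and construction in Part 2. The only point requiring care, which I would emphasize, is bookkeeping of degrees of freedom — the same hydrostatic equation governs both cases, but fixing $\rho$ \emph{and} $T$ simultaneously over-determines $P$ and annihilates its gradient, whereas the nonzero self-gravity demands a strictly negative gradient; freeing $T(r)$ restores consistency. The secondary detail to state cleanly is the finite-radius positivity bound on $P(r)$ in Part 2, ensuring the constructed solution is physically meaningful rather than formal.
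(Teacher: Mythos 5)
Your proposal is correct and follows essentially the same route as the paper: constant density plus isothermality forces $\frac{dP}{dr}=0$, contradicting the strictly negative gravitational term $-\frac{4\pi}{3}G\rho_0^2 r$ for $r>0$, while dropping isothermality allows the explicit integration $P(r)=P_c-\frac{2\pi G}{3}\rho_0^2 r^2$. Your additions (the Poisson-equation reformulation of the contradiction, the explicit temperature profile $T(r)=mP(r)/(k_B\rho_0)$, and the finite-radius positivity bound) are harmless strengthenings; the only material in the paper's proof you omit is its supplementary relativistic (TOV) discussion, which the theorem statement itself does not require.
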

\begin{proof}
\textbf{Newtonian hydrostatic equilibrium.}
For a spherically symmetric star in Newtonian gravity, the hydrostatic equilibrium (HE) equation is
\begin{equation}
\frac{dP}{dr} = - \rho(r) \frac{G M(r)}{r^2}, \qquad M(r) = 4 \pi \int_0^r \rho(s) s^2 ds.
\end{equation}

\subsection*{Case A: Isothermal gas.}
For an ideal gas at constant temperature $T$,
\begin{equation}
P(r) = \frac{k_B T}{m} \rho(r).
\end{equation}
If $\rho(r) = \rho_0$ is constant, then $P(r)$ is constant, so $\frac{dP}{dr} = 0$. The HE equation reduces to
\begin{equation}
0 = - \rho_0 \frac{G M(r)}{r^2}.
\end{equation}
For $r>0$, $M(r) = \frac{4\pi}{3} \rho_0 r^3 > 0$, hence the right-hand side is strictly negative. This is a contradiction.
\emph{Conclusion:} No uniform-density isothermal Newtonian star exists.

\subsection*{Case B: Non-isothermal gas.}
If $\rho(r) = \rho_0$ but the pressure (or temperature) varies radially, then
\begin{equation}
\frac{dP}{dr} = - \rho_0 \frac{G M(r)}{r^2} = - \frac{4\pi}{3} G \rho_0^2 r.
\end{equation}
Integrating from $r=0$ to $r$,
\begin{equation}
P(r) = P_c - \frac{2\pi G}{3} \rho_0^2 r^2,
\end{equation}
where $P_c$ is the central pressure. This satisfies $dP/dr < 0$ for $r>0$, balancing gravity. Hence uniform-density Newtonian stars are possible if the pressure varies with radius.

\subsection*{Relativistic hydrostatic equilibrium (TOV equation).}
For a spherically symmetric star, the Tolman--Oppenheimer--Volkoff equation is
\begin{equation}
\frac{dP}{dr} = - \frac{G (\rho + P/c^2) (M(r) + 4\pi r^3 P/c^2)}{r^2 (1 - 2G M(r)/c^2 r)},
\qquad M(r) = 4 \pi \int_0^r \rho(s) s^2 ds.
\end{equation}
If $\rho(r) = \rho_0$ is constant, then $M(r) = \frac{4\pi}{3} \rho_0 r^3$.
For an isothermal gas ($P \propto \rho_0 = \text{constant}$), we have $dP/dr = 0$, which is incompatible with the TOV equation unless $\rho_0 = 0$.
Therefore, uniform-density relativistic stars can only exist if the pressure varies radially, i.e., the gas is non-isothermal or the equation of state allows radial variation.

\subsection*{Conclusion.}
Uniform-density isothermal stars are impossible in both Newtonian and relativistic gravity. Uniform-density non-isothermal stars are possible provided the pressure or temperature varies with radius to satisfy hydrostatic equilibrium.
\end{proof}

\begin{lem}[Energy of an Isothermal Polytrope]
Consider a self-gravitating polytropic gas with index $n$ and adiabatic index $\gamma = 1 + 1/n$. Its total energy is
\begin{equation}
E = \int_{\Omega} \frac{P(\mathbf{x})}{\gamma-1} \, dV + \frac{1}{2} \int_{\Omega} \rho(\mathbf{x}) \, \Phi(\mathbf{x}) \, dV,
\end{equation}
where $P(\mathbf{x}) = K \rho^\gamma(\mathbf{x})$ and $\Phi(\mathbf{x})$ is the gravitational potential.
\end{lem}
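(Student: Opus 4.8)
The plan is to split the total energy into a thermal (internal) part and a gravitational part and to evaluate each separately. The gravitational self-energy of the mass distribution is the standard expression $W = \tfrac12\int_\Omega \rho(\mathbf{x})\Phi(\mathbf{x})\, dV$, the factor $\tfrac12$ preventing double-counting of pairwise interactions; this was already obtained in Section 4 from the Vlasov energy integral and coincides with the second term in the claimed identity. The entire problem therefore reduces to identifying the total internal energy $U_{\mathrm{int}} = \int_\Omega u(\mathbf{x})\, dV$, i.e.\ to proving that the internal energy density of a $\gamma$-law polytrope is $u = P/(\gamma-1)$.

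To establish the latter I would proceed purely thermodynamically. Treating the gas as ideal with equation of state $P = \rho\, k_B T/m$ and adiabatic exponent $\gamma = 1 + 1/n = c_P/c_V$, I would first invoke Mayer's relation (per unit mass) $c_P - c_V = k_B/m$ to solve for the specific heat at constant volume, $c_V = k_B/[m(\gamma-1)]$. Since the specific internal energy of an ideal gas depends on temperature alone, $\epsilon = c_V T$; substituting $c_V$ and eliminating $T$ through the equation of state gives $\epsilon = \frac{1}{\gamma-1}\frac{P}{\rho}$. Multiplying by $\rho$ yields the internal energy density $u = \rho\epsilon = P/(\gamma-1)$, and integrating over $\Omega$ produces the first term $\int_\Omega P/(\gamma-1)\, dV$.

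Adding the two contributions gives $E = \int_\Omega \frac{P}{\gamma-1}\, dV + \tfrac12\int_\Omega \rho\Phi\, dV$, as claimed. The step demanding the most care---and the main conceptual obstacle---is the interpretation of $\gamma$: the identity $u = P/(\gamma-1)$ uses the \emph{thermodynamic} adiabatic exponent and must be kept distinct from the structural polytropic index appearing in $P = K\rho^{1+1/n}$. As a consistency check, for a monatomic gas ($\gamma = 5/3$) the formula reproduces $u = \tfrac32 P$, which matches the purely kinetic internal energy density $\int_{\mathbb{R}^3}\frac{|\mathbf{p}|^2}{2m}\mathcal{F}\, d^3p$ of Section 4. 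Finally, one should note that the expression degenerates in the strict isothermal limit $\gamma \to 1$ ($n \to \infty$), where $P/(\gamma-1)$ diverges; there the thermal energy must be referred to the direct translational value $\tfrac32 P$ or measured relative to a reference state.
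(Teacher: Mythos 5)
Your proposal is correct, and it is in fact more of a proof than the paper itself supplies. The paper's own ``proof'' of this lemma never derives the stated energy expression at all: it consists solely of a remark that in the isothermal limit $\gamma \to 1$ ($n \to \infty$) the internal energy density $P/(\gamma-1)$ diverges, along with the total mass and gravitational energy of an unbounded configuration, and that a Bonnor--Ebert-type cutoff regularizes these quantities. In other words, the paper treats the decomposition $E = \int_\Omega P/(\gamma-1)\, dV + \tfrac12\int_\Omega \rho\Phi\, dV$ as standard and only comments on its degenerate limit --- an observation your final paragraph reproduces. Your argument actually establishes the formula: the gravitational term is inherited from the Section~4 Vlasov energy integral $E_{tot} = \int \mathcal{E}\, d^3x + \tfrac12\int\rho\Phi\, d^3x$, and the internal term follows from Mayer's relation $c_P - c_V = k_B/m$, the identification $\epsilon = c_V T$, and elimination of $T$ through the ideal-gas law, giving $u = \rho\epsilon = P/(\gamma-1)$. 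Your caveat about the meaning of $\gamma$ is also well taken and is precisely the point the lemma glosses over: the identity $u = P/(\gamma-1)$ uses the thermodynamic adiabatic exponent, whereas $P = K\rho^{1+1/n}$ is a structural relation, and the lemma silently identifies the two; note also that the paper's Section~4 internal energy density $\int_{\mathbb{R}^3}\frac{|\mathbf{p}|^2}{2m}\mathcal{F}\, d^3p = \tfrac32 P$ agrees with $P/(\gamma-1)$ only for $\gamma = 5/3$, exactly your consistency check. So the comparison is: your route buys an actual derivation of the claim (at the cost of importing standard ideal-gas thermodynamics), while the paper's route buys only the interpretive statement about why the isothermal limit must be regularized, leaving the lemma itself unproved.
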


\begin{proof}
For $\gamma \to 1$ ($n \to \infty$), corresponding to an isothermal gas, the internal energy density $\mathcal{E} = P/(\gamma-1)$ formally diverges, as does the total mass and gravitational energy for an unbounded system. This reflects the infinite heat capacity and mass of an ideal infinite isothermal sphere. Introducing a finite cutoff (e.g., a Bonnor-Ebert sphere) regularizes these quantities, making the system physically meaningful.
\end{proof}

\begin{lem}[Negative Specific Heat of a Bound Isothermal Gas]
Let $\mathcal{F}_{E}(\mathbf{x},\mathbf{p}) = A \exp\big[-\beta (|\mathbf{p}|^2/2m + m \Phi(\mathbf{x}))\big]$, $\beta=1/(k_B T)$, be a stationary Maxwell--Boltzmann distribution of $N$ particles in a finite, self-gravitating system. Define
\begin{equation}
\langle {K} \rangle = \int_{\Gamma} \frac{|\mathbf{p}|^2}{2m} \mathcal{F}_{E}\, d^3x\,d^3p, \qquad
\langle {W} \rangle = \frac{1}{2} \int_{\Omega} \rho(\mathbf{x}) \Phi(\mathbf{x}) \, d^3x,
\end{equation}
with $\rho(\mathbf{x}) = \int m \mathcal{F}_{E}\, d^3p$. Assuming the virial theorem $2\langle {K}\rangle + \langle {W}\rangle = 0$, the total energy is
\begin{equation}
E = \langle {K}\rangle + \langle {W}\rangle = -\langle {K}\rangle = -\frac{3}{2} N k_B T,
\end{equation}
so that the specific heat at constant volume is
\begin{equation}
C_V = \frac{dE}{dT} = -\frac{3}{2} N k_B < 0.
\end{equation}
\end{lem}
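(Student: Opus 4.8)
The plan is to reduce the entire statement to the mean kinetic energy, which the earlier corollary has already evaluated, and then to apply the assumed virial relation. First I would compute $\langle K\rangle$. Since the momentum dependence of $\mathcal{F}_{E}$ is the pure Gaussian $\exp(-\beta|\mathbf{p}|^2/2m)$, which factors out of the spatial part, the corollary gives the mean kinetic energy per particle as $3/(2\beta)$ independently of the form of $\Phi(\mathbf{x})$. Multiplying by the particle number $N=\int_{\Gamma}\mathcal{F}_{E}\,d^3x\,d^3p$ yields $\langle K\rangle=\tfrac{3}{2}N/\beta=\tfrac{3}{2}Nk_{B}T$, using $\beta=1/(k_{B}T)$.

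Next I would substitute into the virial identity. The hypothesis $2\langle K\rangle+\langle W\rangle=0$ gives $\langle W\rangle=-2\langle K\rangle$, so the total energy collapses to
\begin{equation}
E=\langle K\rangle+\langle W\rangle=\langle K\rangle-2\langle K\rangle=-\langle K\rangle=-\tfrac{3}{2}Nk_{B}T.
\end{equation}
This sign flip is the essential structural feature of a self-gravitating system: the virial relation ties the positive kinetic energy to a total energy of opposite sign, which is ultimately responsible for the anomalous thermodynamics. Finally I would differentiate $E(T)=-\tfrac{3}{2}Nk_{B}T$ with respect to $T$ at fixed $N$, obtaining $C_{V}=dE/dT=-\tfrac{3}{2}Nk_{B}<0$.

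The algebra here is trivial; the genuine content is \emph{interpretive}, and that is where the main obstacle lies. The derivative $C_{V}=dE/dT$ is meaningful only along the one-parameter family of virialized equilibria labelled by $T$, at fixed particle number $N$ and fixed confining geometry (a bounded $\Omega$ with $\mathcal{F}_{E}|_{\partial\Omega}=0$, as in the virial lemma, so that no boundary-pressure term spoils $2\langle K\rangle+\langle W\rangle=0$). One must verify that both the identity $\langle K\rangle=\tfrac{3}{2}Nk_{B}T$ and the virial relation persist uniformly along this sequence: the former is automatic because the kinetic piece is potential-independent, and the latter holds for every stationary Vlasov--Poisson solution by the preceding lemma. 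Granting this, term-by-term differentiation in $T$ is legitimate and the negativity of $C_{V}$ follows at once. I would also flag that the \textbf{finite, self-gravitating} hypothesis is indispensable: for the unbounded isothermal sphere both $\langle K\rangle$ and $\langle W\rangle$ diverge through the $1/r^2$ density tail, so the result genuinely applies only to a truncated configuration such as a Bonnor--Ebert sphere.
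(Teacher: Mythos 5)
Your proposal is correct and follows essentially the same route as the paper's proof: the Gaussian momentum integral (which you obtain via the earlier corollary) gives $\langle K\rangle = \tfrac{3}{2}Nk_BT$, the assumed virial relation collapses the total energy to $E=-\langle K\rangle$, and differentiation along the one-parameter family of Maxwell--Boltzmann equilibria labelled by $T$ yields $C_V=-\tfrac{3}{2}Nk_B<0$. The interpretive caveats you add --- that the derivative only makes sense along the equilibrium family at fixed $N$ and confining geometry, and that finiteness is essential because the untruncated isothermal sphere has divergent $\langle K\rangle$ and $\langle W\rangle$ --- match the paper's own phrasing in the proof and the remark that follows the lemma.
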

\begin{proof}
Performing the Gaussian momentum integral gives $\langle {K}\rangle = \frac{3}{2} N k_B T$. By the virial theorem, $\langle {W}\rangle = -2\langle {K}\rangle$, hence $E=-\langle {K}\rangle$. Differentiating along the one-parameter family of MB equilibria with temperature $T$ yields $C_V = -\frac{3}{2} N k_B$, confirming negative specific heat.
\end{proof}
\begin{rem}[Interpretation]
The divergence of internal energy for an infinite isothermal sphere is a formal property, whereas negative specific heat is a physical consequence for a finite, bound system: adding energy reduces kinetic energy (temperature), characteristic of long-range gravitational systems.
\end{rem}

\begin{section}{Relativistic Isothermal Spheres and the Einstein--Vlasov System}
In general relativity, the Newtonian notion of an isothermal gas sphere is generalized through the \emph{Einstein--Vlasov system}, which describes a collisionless ensemble of particles interacting solely via gravity \textbf{[63--70]}. This provides a kinetic-theoretic formulation of self-gravitating systems, preserving a global notion of temperature without assuming local thermodynamic equilibrium.

Let $\mathcal{F}(x^\mu,p^\nu)$ denote the one-particle distribution function on the mass shell
\begin{equation}
g_{\mu\nu} p^\mu p^\nu = - m^2.
\end{equation}
The relativistic Vlasov equation governing collisionless transport in curved spacetime is
\begin{equation}
p^\mu \frac{\partial \mathcal{F}}{\partial x^\mu}
 - \Gamma^\mu_{\alpha\beta} p^\alpha p^\beta
   \frac{\partial \mathcal{F}}{\partial p^\mu} = 0,
\end{equation}
expressing the conservation of phase-space density along geodesics. The energy--momentum tensor is obtained by integrating over the local mass shell,
\begin{equation}
T^{\mu\nu}(x) = \int_{\mathcal{P}_x} p^\mu p^\nu \,
\mathcal{F}(x,p) \, \frac{d^3 p}{p^0},
\end{equation}
which couples to the Einstein field equations,
\begin{equation}
G_{\mu\nu} = 8\pi G \, T_{\mu\nu}.
\end{equation}

For static, spherically symmetric systems, the line element is written as
\begin{equation}
ds^2 = - e^{2\Phi(r)} c^2 dt^2 + e^{2\lambda(r)} dr^2
       + r^2 (d\theta^2 + \sin^2\theta \, d\varphi^2).
\end{equation}
By the relativistic Jeans theorem, any steady-state distribution function depends only on constants of motion along geodesics. In spherical symmetry these are the particle energy at infinity
\begin{equation}
E = -p_t,
\end{equation}
and the squared angular momentum
\begin{equation}
L = |{\bf x} \times {\bf p}|^2.
\end{equation}
Hence any isotropic equilibrium state may be expressed as
\begin{equation}
\mathcal{F}(x^\mu,p^\nu) = \mathcal{F}(E,L).
\end{equation}

A relativistic \emph{isothermal} distribution is obtained by the Jüttner ansatz
\begin{equation}
\mathcal{F}(E) = A \exp\!\left[-\frac{E}{k_B T_\infty}\right],
\end{equation}
where $T_\infty$ is the temperature measured by a static observer at infinity, and $A$ is a normalization constant. The local temperature satisfies the Tolman relation,
\begin{equation}
T(r) e^{\Phi(r)} = T_\infty = \text{constant}.
\end{equation}

For an isotropic distribution $\mathcal{F}(E)$, the local energy density $\rho(r)$, radial pressure $P_r(r)$, and tangential pressure $P_t(r)$ are given by
\begin{align}
\rho(r) &= \int_{\mathbb{R}^3}
\mathcal{F}(E,L) \, \sqrt{m^2 + g_{ij} p^i p^j} \, d^3p, \\
P_r(r) &= \int_{\mathbb{R}^3}
\mathcal{F}(E,L) \,
\frac{(p^r)^2}{\sqrt{m^2 + g_{ij} p^i p^j}} \, d^3p, \\
P_t(r) &= \int_{\mathbb{R}^3}
\mathcal{F}(E,L) \,
\frac{(p^\theta)^2 + (p^\varphi)^2}{2\sqrt{m^2 + g_{ij} p^i p^j}} \, d^3p.
\end{align}

Switching to spherical momentum coordinates $(p, \alpha, \beta)$ defined by
\begin{equation}
p_r = p \cos\alpha, \qquad
p_\theta = p \sin\alpha \cos\beta, \qquad
p_\varphi = p \sin\alpha \sin\beta,
\end{equation}
the momentum-space volume element becomes
\begin{equation}
d^3p = p^2 \sin\alpha \, dp \, d\alpha \, d\beta.
\end{equation}
Then, using $\mathcal{F}(E) = A \exp[-\sqrt{m^2 + p^2} \, e^{\Phi(r)} / k_B T_\infty]$, one obtains
\begin{align}
\rho(r) &= 2\pi \int_0^\pi \sin\alpha \, d\alpha
\int_0^\infty p^2 dp \;
\sqrt{m^2 + p^2} \;
A \exp\!\left[-\frac{\sqrt{m^2 + p^2} \, e^{\Phi(r)}}{k_B T_\infty}\right], \\
P_r(r) &= 2\pi \int_0^\pi \sin\alpha \, d\alpha
\int_0^\infty p^2 dp \;
\frac{p^2 \cos^2\alpha}{\sqrt{m^2 + p^2}} \;
A \exp\!\left[-\frac{\sqrt{m^2 + p^2} \, e^{\Phi(r)}}{k_B T_\infty}\right], \\
P_t(r) &= \pi \int_0^\pi \sin\alpha \, d\alpha
\int_0^\infty p^2 dp \;
\frac{p^2 \sin^2\alpha}{\sqrt{m^2 + p^2}} \;
A \exp\!\left[-\frac{\sqrt{m^2 + p^2} \, e^{\Phi(r)}}{k_B T_\infty}\right].
\end{align}

For an isotropic equilibrium, the angular integrals may be performed explicitly:
\begin{equation}
\int_0^\pi \cos^2\alpha \, \sin\alpha \, d\alpha = \frac{2}{3},
\qquad
\int_0^\pi \sin^2\alpha \, \sin\alpha \, d\alpha = \frac{4}{3}.
\end{equation}
It follows that $P_r = P_t = P$, yielding
\begin{align}
\rho(r) &= 4\pi A \int_0^\infty
p^2 \sqrt{m^2 + p^2} \,
e^{ - \sqrt{m^2 + p^2}\, e^{\Phi(r)} / k_B T_\infty } \, dp, \\
P(r) &= \frac{4\pi A}{3} \int_0^\infty
\frac{p^4}{\sqrt{m^2 + p^2}} \,
e^{ - \sqrt{m^2 + p^2}\, e^{\Phi(r)} / k_B T_\infty } \, dp.
\end{align}

These one-dimensional integrals determine $\rho(r)$ and $P(r)$ as functions of $\Phi(r)$ and the parameters $(m, T_\infty)$. They can be evaluated numerically or expanded in the non-relativistic limit $p^2 \ll m^2$ to recover the classical isothermal gas sphere.

Einstein’s equations imply the relativistic hydrostatic equilibrium condition,
\begin{equation}
\frac{dP}{dr} =
 - (\rho + P/c^2)
 \frac{G [M(r) + 4\pi r^3 P/c^2]}
      {r^2 [1 - 2GM(r)/r c^2]},
\end{equation}
where the enclosed mass satisfies
\begin{equation}
\frac{dM}{dr} = 4\pi r^2 \rho(r).
\end{equation}

Together with the Jüttner distribution and Tolman law, these equations define the \emph{relativistic isothermal sphere}. In the weak-field, low-temperature limit, this system reduces to the Newtonian isothermal gas sphere, while at high compactness it provides a kinetic-theory analogue of relativistic stellar equilibrium.

\end{section}

\newpage

\end{document}